\DeclareMathOperator*{\argmin}{arg\,min}
\newtheorem{theorem}{Theorem}
\newtheorem{lemma}{Lemma}
\newtheorem{assumption}{Assumption}
\newtheorem{definition}{Definition}
\renewcommand{\vec}[1]{\boldsymbol{#1}}
\newcommand{\norm}[1]{\left\lVert#1\right\rVert}
\newcommand{\algorithmicbreak}{\textbf{break}}
\newcommand{\BREAK}{\STATE \algorithmicbreak}
\begin{document}

\title{Joint Device Scheduling and Resource Allocation for Latency Constrained Wireless Federated Learning}

\author{Wenqi Shi,
        Sheng Zhou,~\IEEEmembership{Member,~IEEE},
        Zhisheng Niu,~\IEEEmembership{Fellow,~IEEE}, \linebreak Miao Jiang, and Lu Geng

\thanks{This work is sponsored in part by the National Key R\&D Program of China 2018YFB1800800 and 2018YFB0105005, the Nature Science Foundation of China (No. 61871254, No. 91638204, No. 61861136003), and Hitachi Ltd. Part of this work has been accepted in IEEE ICC 2020 \cite{shi2019device}. (corresponding author: Sheng Zhou)}
\thanks{W. Shi, S. Zhou, and Z. Niu are with the Beijing National
Research Center for Information Science and Technology, Department of Electronic Engineering, Tsinghua University, Beijing 100084,
China (e-mail: swq17@mails.tsinghua.edu.cn; sheng.zhou@tsinghua.edu.cn;
niuzhs@tsinghua.edu.cn).}
\thanks{M. Jiang and L. Geng are with Hitachi (China) Research \& Development Cooperation, Beijing 100190, China (e-mail: miaojiang@hitachi.cn; lgeng@hitachi.cn).}
}


\maketitle

\begin{abstract}
In federated learning (FL), devices contribute to the global training by uploading their local model updates via wireless channels. Due to limited computation and communication resources, device scheduling is crucial to the convergence rate of FL. In this paper, we propose a joint device scheduling and resource allocation policy to maximize the model accuracy within a given \emph{total training time} budget for \emph{latency constrained} wireless FL. A lower bound on the reciprocal of the training performance loss, in terms of the number of training rounds and the number of scheduled devices per round, is derived. Based on the bound, the accuracy maximization problem is solved by decoupling it into two sub-problems. First, given the scheduled devices, the optimal bandwidth allocation suggests allocating more bandwidth to the devices with worse channel conditions or weaker computation capabilities. Then, a greedy device scheduling algorithm is introduced, which in each step selects the device consuming the least updating time obtained by the optimal bandwidth allocation, until the lower bound begins to increase, meaning that scheduling more devices will degrade the model accuracy. Experiments show that the proposed policy outperforms state-of-the-art scheduling policies under extensive settings of data distributions and cell radius.
\end{abstract}

\begin{IEEEkeywords}
Federated learning, wireless networks, resource allocation, scheduling, convergence analysis
\end{IEEEkeywords}

%
\IEEEpeerreviewmaketitle

\section{Introduction}
\IEEEPARstart{A}{ccording} to Cisco's estimation, nearly 850 zettabytes of data will be generated each year at the network edge by 2021 \cite{cisco}.
These valuable data can bring diverse artificial intelligence (AI) services to end users by leveraging deep learning techniques \cite{goodfellow2016deep}, which are developing rapidly in recent years.
However, training AI models (typically deep neural networks) via conventional centralized training methods requires aggregating all raw data to a central server.
Since uploading raw data via wireless channels can drain the wireless bandwidth and cause privacy issues when the raw data are uploaded to the central server \cite{chen2012data}, it is hardly practical to use conventional centralized training methods in wireless networks \cite{saad2019vision}.

To address the aforementioned issues, researchers have proposed a new distributed model training framework called Federated Learning (FL) \cite{mcmahan2017communication, li2019federated}.
A typical wireless FL system leverages the computation capabilities of multiple end devices, which are coordinated by a central controller, for example a base station (BS), to train a model in an iterative fashion \cite{lim2019federated}.
In each iteration of FL (also known as a round), the participating devices use their local data to update the local models, and then the local models are sent to the BS for global model aggregation.
By updating the model parameters locally, FL leverages both the data and computation capabilities distributed on devices, and hence can reduce the model training latency as well as preserving the data privacy.
Therefore, FL becomes a promising technology for distributed data analysis and model training in wireless networks \cite{park2018wireless, zhu2020toward}, and has been used in many applications, for instance, resource allocation optimization in vehicle-to-vehicle (V2V) communications \cite{samarakoon2018federated} and content recommendations for smartphones \cite{bonawitz2019towards}.

However, implementing FL in real wireless networks encounters several key challenges that have not been fully resolved yet.
Due to the scarce spectrum resources and stringent training latency budget, only a limited number of devices are allowed to upload local models in each round, where the device scheduling policy becomes crucial and can affect the convergence rate of FL in two ways.
On the one hand, in each round, the BS cannot perform the global model aggregation until all scheduled devices have finished updating their local models and uploading the local model updates.
Therefore, straggler devices with limited computation capabilities or bad channel conditions can significantly slow down the model aggregation.
As a result, scheduling more devices leads to a longer latency per round, due to the reduced bandwidth allocated to each scheduled device and a higher probability of having straggler devices.
On the other hand, scheduling more devices increases the convergence rate w.r.t. the number of rounds \cite{stich2019local, li2019convergence}, and can potentially reduce the number of rounds required to attain the same accuracy.
Therefore, if we look at the total training time, which is the number of rounds times the average latency per round, the device scheduling is essential and should be carefully optimized to balance the latency per round and the number of required rounds.
Moreover, the scheduling policy should also adapt itself to the dynamic wireless environment.

Recently, implementing FL in wireless networks has received many research efforts.
To reduce the uploading latency introduced by global model aggregation, novel analog aggregation techniques have been proposed in \cite{zhu2018low, amiri2019machine, yang2020federated}.
For analog aggregation, the scheduled devices concurrently transmit their local models via analog modulation in a wireless multiple-access channel, and thus the BS receives the aggregated model thanks to the waveform-superposition property.
Although the uploading latency can be greatly reduced, stringent synchronization among devices is required.
While for digital transmission based FL, the scheduled devices need to share the limited wireless resources and the resource allocation problems have been studied by a series of work.
The authors of \cite{tran2019federated} adopt TDMA for the MAC layer, and jointly optimize the device CPU frequency, the transmission latency, and the local model accuracy to minimize the weighted sum of training latency and total device energy consumption.
A similar FL system but with FDMA is considered in \cite{yang2019energy}.
On the other hand, the frequency of global aggregation under heterogeneous resource constraints has been optimized in \cite{wang2019adaptive, wang2018edge}.
In \cite{tran2019federated, yang2019energy, wang2019adaptive, wang2018edge}, all devices are involved in each round, which is hardly feasible in practical wireless FL applications due to the limited wireless bandwidth.
Another series of work proposes to use device scheduling to optimize the convergence rate of FL.
A heuristic scheduling policy that jointly considers the channel states and the importance of local updated models, is proposed in \cite{amiri2020update}.
However, the proposed scheduling policy is only evaluated by experiments and the convergence performance cannot be theoretically guaranteed.
{A greedy scheduling policy is proposed in \cite{nishio2019client} that schedules as many devices as possible within a given deadline for each round.
Nevertheless, the deadline is chosen through experiments and can hardly be adapted to dynamic channels and device computation capabilities.}
In \cite{zeng2019energy}, the authors exploit an intuition that the convergence rate of FL increases linearly with the number of scheduled devices, and accordingly an energy-efficient joint bandwidth allocation and scheduling policy is proposed.
{The relation between the number of rounds required to attain a certain accuracy and the scheduling policy is derived in \cite{yang2019scheduling}, and three basic scheduling policies, namely random scheduling, round-robin, and proportional fair, are compared.
However, due to the complicated relation between the number of rounds required to attain a certain accuracy and the per round latency, the convergence rate w.r.t. \emph{the number of rounds} obtained by \cite{yang2019scheduling} cannot be directly transformed into the convergence rate w.r.t. \emph{time}.
The authors of \cite{chen2019joint} jointly optimize the uplink resource block allocation and transmission power to maximize the \emph{asymptotic} convergence performance of FL, while the performance can hardly be guaranteed for latency constrained wireless FL applications.}
Therefore, the model accuracy within certain training time budgets (i.e., convergence rate w.r.t. \emph{time}), which is critical for latency constrained FL applications \cite{zhou2019edge, wang2019edge_2}, has not been addressed yet.

In this paper, we aim to optimize the convergence rate of FL w.r.t. \emph{time} rather than the number of rounds.
Specifically, we formulate a joint bandwidth allocation and scheduling problem to maximize the accuracy of the trained model, and further decouple the problem into two sub-problems, i.e., bandwidth allocation and device scheduling.
For the bandwidth allocation problem, assuming a given set of scheduled devices, the implicit optimal solution that minimizes the latency of the current round is first obtained, and an efficient binary search algorithm is proposed to numerically get the optimal bandwidth allocation and the corresponding round latency.
For the device scheduling problem, by relaxing the objective into minimizing the upper bound of the loss function based on a derived convergence bound that incorporates device scheduling,
{we design a greedy algorithm that adds devices one by one with the shortest updating time to the scheduled devices set}, until the convergence bound begins to increase, meaning that scheduling more devices will reduce the convergence rate.
Our main contributions are summarized as follows.
\begin{itemize}
  \item{{We theoretically bound the impact of the device scheduling in each round, based on which the convergence analysis from \cite{wang2019adaptive} is extended to derive a convergence bound of FL in terms of the number of rounds and the number of scheduled devices.} The bound applies for non-independent and identically distributed (non-i.i.d.) local datasets and an arbitrary number of scheduled devices in each round.}
  \item{The obtained convergence bound quantifies the trade-off between the latency per round and the number of required rounds to attain a fixed accuracy, and thus the device scheduling can be accordingly optimized to maximize the convergence rate of FL.}
  \item{Using the obtained convergence bound, we design a device scheduling policy according to the learned loss function characteristics, gradient characteristics, and system dynamics in real time, in order to minimize the loss function value under a given training time budget.}
  \item{Our experiments show that the optimal number of scheduled devices increases with the non-i.i.d. level of local datasets, and the proposed
  scheduling policy adapts to non-i.i.d. local datasets and can achieve near-optimal performance. Moreover, the proposed scheduling policy outperforms several state-of-the-art scheduling policies in terms of the highest achievable accuracy within the total training time budget.}
\end{itemize}

The remainder of this paper is organized as follows. In Section II, we introduce the system model and formulate the convergence rate optimization problem. We derive a convergence bound and approximately solve the problem in Section III. The experiment results are shown in Section IV and we conclude the paper in Section V.

\section{System Model}

We introduce the basic concepts of FL, the procedure of FL, the latency model, and the problem formulation in this section. The main notations are summarized in Table \ref{notation}.

\begin{table}[!t]
\setlength{\abovecaptionskip}{2pt}
\setlength{\belowcaptionskip}{2pt}
\caption{Summary of Main Notations}
\label{notation}
\begin{center}
\begin{tabular}{c p{0.68\linewidth}}
\hline
\textbf{Notation} & \textbf{Definition}\\
\hline
$\mathcal{M}$; $M$ & Set of devices; size of $\mathcal{M}$\\
$R$; $T$; $K$ & Cell radius; total training time budget; total number of rounds within $T$ \\
$\mathcal{D}_i$; $D_i$; {$d_i$} & Local dataset of device $i$; size of $\mathcal{D}_i$; {batch size of the local update of device $i$}\\
$\mathcal{D}$; $D$ & Global dataset; size of $\mathcal{D}$ \\
$F_i(\vec{w})$; $F(\vec{w})$ & Local loss function of device $i$; global loss function \\
$\vec{w}_{i,k}(j)$ & Local model of device $i$ in the $j$-th local update of the $k$-th round \\
$\vec{w}_{k}^{\vec{\Pi}_{[k]}}$; $\tilde{\vec{w}}$  & Global model in the $k$-th round; the global model that has the minimum loss within $T$ \\
$\Pi_k$ & Scheduling policy of the $k$-th round, i.e., the subset of scheduled devices \\
$t_{i,k}^{\text{cp}}$; $t_{i,k}^{\text{cm}}$; $t_k^{\text{round}}(\Pi)$ & Computation latency; communication latency; round latency under policy $\Pi$\\
$B$; $N_0$ & System bandwidth; noise power density\\
$P_i$; $h_{i,k}$ & Transmit power of device $i$; channel gain of device $i$ in the $k$-th round \\
$\gamma_{i,k}$ & Bandwidth allocation ratio of device $i$ in the $k$-th round \\
$\tau$; $\eta$ & Number of local updates performed  by the scheduled devices between two adjacent global aggregations; learning rate \\
$\rho$; $\beta$; $\delta_i$ & Convexity of $F_i(\vec{w})$; smoothness of $F_i(\vec{w})$; divergence of the gradient $\nabla F_i(\vec{w})$\\
\hline
\end{tabular}
\end{center}
\vspace{-10pt}
\end{table}

We consider an FL system consisting of one BS and $M$ end devices, and the devices are indexed by $\mathcal{M}=\{1,2,\dots,M\}$. Each device $i$ has a local dataset $\mathcal{D}_i=\{\vec{x}_{i,d}\in \mathbb{R}^s, y_{i,d} \in\mathbb{R}\}_{d=1}^{D_i}$, with $D_i=|\mathcal{D}_i|$ data samples.
Here $\vec{x}_{i,d}$ is the $d$-th $s$-dimensional input data vector at device $i$, and $y_{i,d}$ is the labeled output of $\vec{x}_{i,d}$.
The whole dataset is denoted by $\mathcal{D} = \mathop{\cup} \limits_{i \in \mathcal{M}}\mathcal{D}_i$ with total number of samples $D = \sum \limits_{i \in \mathcal{M}}D_i$.

The goal of the training process is to find the model parameter $\vec{w}$, so as to minimize a particular loss function on the whole dataset. The optimizing objective can be expressed as
\begin{equation}
\setlength\abovedisplayskip{3pt}
\setlength\belowdisplayskip{3pt}
    \min \limits_{\vec{w}} \left\{ F(\vec{w}) \triangleq
     \frac{1}{D}\sum_{i\in \mathcal{M}} D_i F_i(\vec{w})\right\}, \label{GL}
\end{equation}
where the local loss function $F_i(\vec{w})$ is defined as $F_i(\vec{w}) \triangleq \frac{1}{D_i}\sum_{\{\vec{x}_{i,d}, y_{i,d}\} \in \mathcal{D}_i} f(\vec{w}, \vec{x}_{i,d}, y_{i,d})$,
and the loss function $f(\vec{w}, \vec{x}_{i,d}, y_{i,d})$ captures the error of the model parameter $\vec{w}$ on the input-output data pair $\{\vec{x}_{i,d}, y_{i,d}\}$. Some examples of loss functions used in popular machine learning models are summarized in Table \ref{table1}.

\begin{table}[!t]
\setlength{\abovecaptionskip}{2pt}
\setlength{\belowcaptionskip}{2pt}
\caption{Loss Functions for Popular Machine Learning Models}
\label{table1}
\begin{center}
\begin{tabular}{c p{0.6\linewidth}}
\hline
\textbf{Model} & \textbf{Loss function $f(\vec{w}, \vec{x}_i, y_i)$} \\
\hline
Linear regression & $\frac{1}{2}\left\| y_i- \vec{w}^\mathsf{T}\vec{x}_i \right\|^2$\\
Squared-SVM & $\frac{\lambda}{2}\left\| \vec{w} \right\|^2+\frac{1}{2}\text{max}\{0; 1-y_i\vec{w}^\mathsf{T}\vec{x}_i\}$, where $\lambda$ is a constant \\
Neural network & Cross-entropy on cascaded linear and non-linear transform, see \cite{goodfellow2016deep} for details\\
\hline
\end{tabular}
\end{center}
\vspace{-10pt}
\end{table}

\subsection{Federated Learning over Wireless Networks}
FL uses an iterative approach to solve problem \eqref{GL},
and each round, indexed by $k$, contains the following 3 steps.
\begin{enumerate}
    \item The BS first decides to schedule which devices to participate in the current round, and the set of scheduled devices in round $k$ is denoted by $\Pi_k$. Then the BS broadcasts the current global model $\vec{w}_{k-1}^{\vec{\Pi}_{[k-1]}}$ to all scheduled devices, where $\vec{\Pi}_{[k-1]} \triangleq [\Pi_1, \Pi_2, \dots, \Pi_{k-1}]$ denotes the historical scheduling decisions up to the $(k-1)$-th round.
    \item Each scheduled device $i\in \Pi_k$ receives the global model (i.e., $\vec{w}_{i,k}(0)\gets \vec{w}_{k-1}^{\vec{\Pi}_{[k-1]}}$) and
    updates its local model by applying the gradient descent algorithm on its local dataset:
    \begin{equation}
    \setlength\abovedisplayskip{3pt}
    \setlength\belowdisplayskip{3pt}
        \vec{w}_{i,k}(j+1)=\vec{w}_{i,k}(j)-\eta \nabla F_i(\vec{w}_{i,k}(j))   ,\  j=0,1,\dots,\tau-1, \label{localupdates}
    \end{equation}
    where $\eta$ is the learning rate.
    {In practice, the local dataset may have thousands or even millions of data samples, making the gradient descent impractical. Therefore, stochastic gradient descent (SGD), which can be regarded as a stochastic approximation of gradient descent, is widely used as a substitution.
    In SGD, the gradient $\nabla F_i(\vec{w}_{i,k}(j))$ is computed on $\mathcal{D}_{b,i}$, a randomly sampled subset from $\mathcal{D}_i$, where $\mathcal{D}_{b,i}$ is called mini-batch and $d_i = |\mathcal{D}_{b,i}|$ is called batch size.}
    The local model update is repeated for $\tau$ times and $\tau$ is considered as a fixed system parameter.
    Then the updated local model $\vec{w}_{i,k}(\tau)$ is uploaded to the BS. In the following part of the paper, we use $\vec{w}_{i,k}$ to denote $\vec{w}_{i,k}(\tau)$ unless otherwise specified.
    \item After receiving all the uploaded models, the BS aggregates them (i.e., weighted averages the uploaded local models according to the size of local datasets) to obtain a new global model:
    \begin{equation}
    \setlength\abovedisplayskip{3pt}
    \setlength\belowdisplayskip{3pt}
        \vec{w}_{k}^{\vec{\Pi}_{[k]}} = \frac{\sum_{i\in\Pi_k} D_i\vec{w}_{i,k}}{\sum_{i\in\Pi_k}D_i}. \label{globalaggregation}
    \end{equation}
\end{enumerate}

\subsection{Latency Model}
We consider an arbitrary round $k$, the total latency of the $k$-th round consists of the following parts:
\subsubsection{Computation Latency}
To characterize the randomness of the computation latency of local model update, we use the shifted exponential distribution\cite{lee2017speeding, li2016unified}:
\begin{equation}
\setlength\abovedisplayskip{3pt}
\setlength\belowdisplayskip{3pt}
\mathbb{P}[t_{i,k}^{\text{cp}}<t]=
\begin{cases}
{1-e^{-\frac{ \mu_i}{\tau d_i}(t-a_i \tau d_i)}} & {\text{, $t\geq a_i \tau d_i$,}} \\
0 & \text{, otherwise,}
\end{cases}
\label{shifted_exponential}
\end{equation}
where $a_i>0$ and $\mu_i>0$ are parameters that indicate the maximum and fluctuation of the computation capabilities, respectively. We assume that $a_i$ and $\mu_i$ stay constant throughout the whole training process.
Moreover, we ignore the computation latency of the model aggregation at the BS, due to the relatively stronger computation capability of the BS and low complexity of the model aggregation.

\subsubsection{Communication Latency}
Regarding the local model uploading phase of the scheduled devices, we consider an FDMA system with total bandwidth $B$. The bandwidth allocated to device $i$ is denoted by $\gamma_{i,k} B$, where $\gamma_{i,k}$ is the allocation ratio that satisfies $\sum_{i=1}^{M} \gamma_{i,k} \leq 1$ and $0 \leq \gamma_{i,k} \leq 1$. Therefore, the achievable transmission rate (bits/s) can be written as
$r_{i,k}  = \gamma_{i,k} B \text{log}_2\left(1+\frac{P_i h_{i,k}^2}{\gamma_{i,k} B N_0}\right)$,
where $P_i$ denotes the transmit power of device $i$, that stays constant with different rounds, and $h_{i,k}$ denotes the corresponding channel gain, and $N_0$ is the noise power density. Thus the communication latency of device $i$ is
\begin{equation}
\setlength\abovedisplayskip{3pt}
\setlength\belowdisplayskip{3pt}
  t_{i,k}^{\text{cm}} = \frac{S}{r_{i,k}},
\end{equation}
where $S$ denotes the size of $\Vec{w}_{i,k}$, in bits.
Since the transmit power of the BS is much higher than that of the devices and the whole downlink bandwidth is used by BS to broadcast the model, here we ignore the latency of broadcasting the global model.

Due to the synchronous model aggregation of FL, the total latency per round $t_k^{\text{round}}(\Pi_k)$ is determined by the slowest device among all the scheduled devices, i.e.,
\begin{align}
\setlength\abovedisplayskip{3pt}
\setlength\belowdisplayskip{3pt}
    t_k^{\text{round}}(\Pi_k) \geq \max_{i\in\Pi_k} \{t_{i,k}^{\text{cm}}+t_{i,k}^{\text{cp}}\}.
\end{align}

\subsection{Problem Formulation}
A joint bandwidth allocation and scheduling problem is formulated  to optimize the convergence rate of FL w.r.t. \emph{time}.
Specifically, we use $K$ to denote the total number of rounds within the training time budget $T$, and minimize the global loss function of $\tilde{\vec{w}}$ within $T$, where $\tilde{\vec{w}}$ is the optimal model parameter that has the minimum global loss function value in the whole training process and defined as
\begin{equation}
\setlength\abovedisplayskip{3pt}
\setlength\belowdisplayskip{3pt}
    \tilde{\vec{w}} \triangleq \underset{\vec{w}\in \{ \vec{w}_k^{\vec{\Pi}_{[k]}}:k=1,2,\dots,K\}}{\text{arg\,min}} F(\vec{w}).
    \label{def-tildew}
\end{equation}
For simplicity, we use $[K]$ and $[M]$ to denote $\{1, 2, \dots, K\}$ and $\{1, 2, \dots, M\}$, respectively. The optimization problem can be written as follows:
\begin{align}
\setlength\abovedisplayskip{3pt}
\setlength\belowdisplayskip{3pt}
    \underset{K, \Vec{\Pi}_{[K]}, \Vec{\gamma}_{[K]}, \vec{t}^\text{round}_{[K]}}{\text{min}} \quad & F(\tilde{\vec{w}}) \tag{P1}\label{P1}\\
    \text{s.t.} \qquad \;\; \quad & \sum_{k=1}^K t^{\text{round}}_k(\Pi_k) \leq T,  \tag{C1.1}\label{C11}\\
    & t_{i,k}^\text{cp}+\frac{S}{\gamma_{i,k}B\text{log}_2\left(1+\frac{P_ih_{i,k}^2}{\gamma_{i,k}BN_0}\right)}\leq t^{\text{round}}_k(\Pi_k),  \tag{C1.2}\label{C12} \\
    & \Pi_k \subset \mathcal{M}, \forall k \in [K], \tag{C1.3}\label{C13} \\
    & \sum_{i=1}^{M} \gamma_{i,k} \leq 1, \forall k \in [K], \tag{C1.4}\label{C14} \\
    &  0 \leq \gamma_{i,k} \leq 1,  \forall k \in [K], \forall i \in[M], \tag{C1.5}\label{C15}
\end{align}
where $\vec{\Pi}_{[K]} = [\Pi_1, \Pi_2, \dots, \Pi_K]$, $\vec{\gamma}_{[K]} \triangleq [\vec{\gamma}_1, \vec{\gamma}_2, \dots, \vec{\gamma}_K]$
with $\vec{\gamma}_k \triangleq [\gamma_{1,k}, \gamma_{2,k}, \dots, \gamma_{M,k}]$, \- and $\vec{t}^\text{round}_{[K]} \triangleq [t^\text{round}_1(\Pi_1), t^\text{round}_2(\Pi_2), \dots, t^\text{round}_K(\Pi_K)]$.

To solve \ref{P1}, we need to know how $K$ and $\Vec{\Pi}_{[K]}$ affect the loss function of the final global model, i.e., $F(\tilde{\vec{w}})$.
Since it is almost impossible to find an exact analytical expression of $F(\tilde{\vec{w}})$ w.r.t. $K$ and $\Vec{\Pi}_{[K]}$, we turn to bound $F(\tilde{\vec{w}})$ in terms of $K$ and $\Vec{\Pi}_{[K]}$.
While in our problem, the local computation latency $t_{i,k}^\text{cp}$ and wireless channel state $h_{i,k}$ can vary with different $k$, thus the optimal scheduling policy $\Vec{\Pi}^*_{[K]}$ can be non-stationary.
Moreover, due to the iterative nature of FL, the global model is related to the scheduling policies of all past rounds.
\if
Although the convergence property of FL has recently received great research efforts, only upper bounds on the convergence rate for some deterministic scheduling policies are known \cite{wang2019adaptive}\cite{yang2019scheduling}.
Therefore, it is generally impossible to find an exact analytical expression of $F(\tilde{\vec{w}})$ w.r.t. $K$ and a non-stationary $\Vec{\Pi}_{[K]}$.
Moreover, the global model of the $(k+1)$-th round (i.e., $\vec{w}_{k+1}^{\vec{\Pi}_{[k+1]}}$) is related to the scheduling policy of the $k$-th round (i.e., $\Pi_k$) and the local updated models of the scheduled devices (i.e., $\vec{w}_{i,k}, \forall i \in \Pi_k$) according to \eqref{globalaggregation}. And the local updated models are related to the global model of the $k$-th round according to \eqref{localupdates}.
Therefore, the global model is related to the scheduling policies of all past rounds.
\fi
As a result, it is very hard to bound  $F(\tilde{\vec{w}})$ under a non-stationary scheduling policy.

In the next section, \ref{P1} is solved in the following way.
First, we decouple \ref{P1} into two sub-problems, namely device scheduling and bandwidth allocation.
Then given the scheduled devices, the bandwidth allocation problem is analytically solved.
Further, based on the optimal bandwidth allocation, and a derived convergence bound of FL under a stationary random scheduling policy, we approximately solve the device scheduling problem with a joint device scheduling and bandwidth allocation algorithm.

\section{Joint Device Scheduling and Bandwidth Allocation}
\ref{P1} is decoupled as follows. First, given the scheduling policy of the $k$-th round (i.e., $\Pi_k$), the bandwidth allocation problem of the $k$-th round can be written as follows:
\begin{align}
\setlength\abovedisplayskip{3pt}
\setlength\belowdisplayskip{3pt}
    \underset{\gamma_{i,k}, t^\text{round}_k(\Pi_k)} {\text{min}} \quad &  t^\text{round}_k(\Pi_k) \tag{P2}\label{P2} \\
    \text{s.t.} \qquad \,\,\  & t_{i,k}^\text{cp}+\frac{S}{\gamma_{i,k}B\text{log}_2\left(1+\frac{P_ih_{i,k}^2}{\gamma_{i,k}BN_0}\right)}\leq t^{\text{round}}_k(\Pi_k), \tag{C2.1}\label{C21} \\
    & \sum_{i=1}^{M} \gamma_{i,k} \leq 1,  \tag{C2.2}\label{C22} \\
    &  0 \leq \gamma_{i,k} \leq 1,   \forall i \in[M]. \tag{C2.3}\label{C23}
\end{align}
Then we denote the optimal value of $t^\text{round}_k(\Pi_k)$ as $t^*_k(\Pi_k)$, the device scheduling problem can be written as follows:
\begin{align}
\setlength\abovedisplayskip{3pt}
\setlength\belowdisplayskip{3pt}
    \underset{K, \Vec{\Pi}_{[K]}} {\text{min}} \quad &  F(\tilde{\vec{w}}) \tag{P3}\label{P3} \\
    \text{s.t.} \quad\,\,\ & \sum_{k=1}^K t^*_k(\Pi_k) \leq T, \tag{C3.1}\label{C31} \\
    & \Pi_k \subset \mathcal{M}. \forall k \in [K]. \tag{C3.2}\label{C32}
\end{align}

\subsection{Bandwidth Allocation}
The optimal solution of \ref{P2} can be obtained using the following theorem.
\begin{theorem}\label{thm3}
The optimal bandwidth allocation of \ref{P2} is as follows
\begin{equation}
\setlength\abovedisplayskip{3pt}
\setlength\belowdisplayskip{3pt}
   \gamma^*_{i,k} = \frac{S \rm{ln}2}{\left(t^*_k(\Pi_k)-t_{i,k}^{\rm{cp}}\right)\left(W\left(-\Gamma_{i,k} e^{-\Gamma_{i,k}}\right)+\Gamma_{i,k}\right)},
   \label{thm1eq1}
\end{equation}
where $\Gamma_{i,k} \triangleq \frac{N_0S\rm{ln}2}{\left(t^*_k(\Pi_k)-t_{i,k}^{\rm{cp}}\right)P_ih_{i,k}^2}$ , $W(\cdot)$ is Lambert-W function, and $t^*_k(\Pi_k)$ is the objective value of (P2) that satisfies
\begin{equation}
\setlength\abovedisplayskip{3pt}
\setlength\belowdisplayskip{3pt}
   \sum_{i\in\Pi_k} \gamma^*_{i,k} = \sum_{i\in\Pi_k} \frac{S \rm{ln}2}{\left(t^*_k(\Pi_k)-t_{i,k}^{\rm{cp}}\right)\left(W\left(-\Gamma_{i,k} e^{-\Gamma_{i,k}}\right)+\Gamma_{i,k}\right)}  = 1.
   \label{t*}
\end{equation}
\end{theorem}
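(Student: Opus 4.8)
The plan is to first establish that constraints \eqref{C21} and \eqref{C22} are both active at every optimum, and then to invert the resulting equalities. Write $t = t^*_k(\Pi_k)$ for brevity. A preliminary observation is that, for fixed $i$, the rate $r_{i,k}(\gamma) = \gamma B \log_2(1 + \frac{P_i h_{i,k}^2}{\gamma B N_0})$ is strictly increasing in $\gamma$: with $x = \frac{P_i h_{i,k}^2}{\gamma B N_0}$ one computes $\frac{d r_{i,k}}{d\gamma} = \frac{B}{\ln 2}[\ln(1+x) - \frac{x}{1+x}] > 0$, since $\ln(1+x) > \frac{x}{1+x}$ for $x > 0$. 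Hence the updating time $t_{i,k}^{\text{cp}} + S/r_{i,k}(\gamma)$ strictly decreases as the allocated ratio $\gamma$ grows. First I would show that \eqref{C22} is tight by an exchange argument: if $\sum_i \gamma_{i,k} < 1$ at an optimum, the spare bandwidth can be handed to a device attaining the maximum in \eqref{C21}, strictly lowering its updating time and hence $t$, a contradiction. With $\sum_i \gamma_{i,k} = 1$ fixed, a second exchange argument forces every \eqref{C21} to be tight: were some device $j$ to have slack, shaving a small amount off $\gamma_{j,k}$ and transferring it to a binding device would again strictly reduce $t$. Verifying that these perturbations remain feasible and strictly improve the objective is the step I expect to demand the most care.

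With both constraints active, every scheduled device obeys $r_{i,k}(\gamma^*_{i,k}) = S/(t - t_{i,k}^{\text{cp}})$. Substituting $u = \frac{P_i h_{i,k}^2}{\gamma^*_{i,k} B N_0}$ and passing to natural logarithms reduces this to the scalar equation $\frac{\ln(1+u)}{u} = \Gamma_{i,k}$, with $\Gamma_{i,k}$ exactly the quantity defined in the statement. I would then solve it in closed form: putting $p = 1 + u$ and rewriting $\ln p = \Gamma_{i,k}(p-1)$ as $(-\Gamma_{i,k} p)\, e^{-\Gamma_{i,k} p} = -\Gamma_{i,k} e^{-\Gamma_{i,k}}$ exposes the Lambert-$W$ structure, so that $-\Gamma_{i,k} p = W(-\Gamma_{i,k} e^{-\Gamma_{i,k}})$ and hence $u = -\frac{W(-\Gamma_{i,k} e^{-\Gamma_{i,k}}) + \Gamma_{i,k}}{\Gamma_{i,k}}$. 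Back-substituting $\gamma^*_{i,k} = \frac{P_i h_{i,k}^2}{u B N_0}$ and simplifying yields the closed form \eqref{thm1eq1}. The delicate point is branch selection: the spurious root $u = 0$ corresponds to the principal branch, where $W(-\Gamma_{i,k} e^{-\Gamma_{i,k}}) = -\Gamma_{i,k}$, so to recover the physical solution with $u > 0$ (equivalently $\gamma^*_{i,k} > 0$) one must take the lower branch $W_{-1}$, which is precisely the regime in which $W(-\Gamma_{i,k} e^{-\Gamma_{i,k}}) + \Gamma_{i,k}$ carries the sign that renders $\gamma^*_{i,k}$ positive; feasibility forces $\Gamma_{i,k} \in (0,1)$, guaranteeing that this root exists.

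It remains to pin down $t$ itself. Each $\gamma^*_{i,k}$ above depends on $t$ alone, and since the required rate $S/(t - t_{i,k}^{\text{cp}})$ decreases as $t$ grows while $r_{i,k}$ is increasing in $\gamma$, every $\gamma^*_{i,k}(t)$ is strictly decreasing in $t$. Therefore $\sum_{i \in \Pi_k} \gamma^*_{i,k}(t)$ is strictly decreasing and continuous in $t$; it grows without bound as $t$ decreases toward the feasibility threshold (where some $\Gamma_{i,k} \to 1$) and tends to $0$ as $t \to \infty$. By the intermediate value theorem there is a unique $t = t^*_k(\Pi_k)$ solving $\sum_{i \in \Pi_k} \gamma^*_{i,k} = 1$, which is exactly \eqref{t*}; together with the tightness of \eqref{C22} this identifies the optimal round latency. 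Finally, positivity of the $\gamma^*_{i,k}$ combined with $\sum_i \gamma^*_{i,k} = 1$ yields $\gamma^*_{i,k} \le 1$, so constraint \eqref{C23} holds automatically and needs no separate treatment.
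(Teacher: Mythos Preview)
Your proposal is correct and follows essentially the same route as the paper: establish monotonicity of the rate in $\gamma$ via the inequality $\ln(1+x)>\tfrac{x}{1+x}$, use an exchange/reallocation argument to conclude that at the optimum the total bandwidth is exhausted and all scheduled devices finish simultaneously, and then invert the resulting equalities. The paper stops at ``solving the system directly leads to the theorem'', whereas you spell out the Lambert-$W$ inversion, the branch selection, the monotone IVT argument fixing $t^*_k(\Pi_k)$, and the automatic satisfaction of \eqref{C23}; these are useful elaborations but not a different method.
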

\begin{proof}
See Appendix \ref{appendix3}.
\end{proof}

Due to the Lambert-W function in \eqref{t*}, in which the argument is related to $t^*_k(\Pi_k)$ via $\Gamma_{i,k}$, we cannot analytically solve \eqref{t*} to derive $t^*_k(\Pi_k)$. Thus a binary search algorithm (Alg. \ref{bandwidthallocation}) is proposed to get the optimal value of \ref{P2} numerically.
Begin with the target value $t$ that equals to the upper bound of the initial searching region $[t_{\text{low}}, t_{\text{up}}]$,
{we iteratively compute the required bandwidth for the current target value $t$ by substituting $t^*_k(\Pi_k)=t$ into \eqref{thm1eq1} (step 3), and derive the total required bandwidth allocation ratio (step 4). The searching region is halved and the smaller half will be retained if the bandwidth is surplus (steps 7-8), while the larger half will be retained if the bandwidth is deficit (steps 9-10).
The searching terminates when the given precision requirement (i.e., $\varepsilon$) is satisfied (steps 5-6), and thus the complexity of Alg. \ref{bandwidthallocation} is on the order of $\mathcal{O}\left(|\Pi_k|\text{log}_2\left(\frac{t_\text{up}}{\varepsilon}\right)\right)$.}

\begin{algorithm}[!t]
    \caption{Binary Search for the Objective Value of \ref{P2}}
    \label{bandwidthallocation}
    \begin{algorithmic}[1]
    \STATE {Give a big enough $t_\text{up}$, initialize $t_\text{low} = \underset{i \in \Pi_k}{\text{max}}\{t_{i,k}^\text{cp}\}$, $t = t_\text{up}$, and set $\mathrm{success = False}$}
    \WHILE {NOT $\mathrm{success}$}
        \STATE {For each user $i \in \Pi_k$, compute the required bandwidth allocation ratio $\gamma_{i,k}$ using \eqref{thm1eq1} by substituting $t_k^*(\Pi_k) = t$}
        \STATE {Compute the summation of required bandwidth allocation ratio $s = \sum_{i \in \Pi_k} \gamma_{i, k}$ }
        \IF {$1-\varepsilon \leq s \leq 1$}
            \STATE {Obtain the solution with accuracy level $\varepsilon$, set $\mathrm{success = True}$}
        \ELSIF {$0< s < 1-\varepsilon$}
            \STATE {Halve the searching region according to {$t_\text{up} = t$, $t = \frac{t+t_\text{low}}{2}$}}
        \ELSE
            \STATE {Halve the searching region according to {$t_\text{low} = t$, $t = \frac{t+t_\text{up}}{2}$}}
        \ENDIF
    \ENDWHILE
    \RETURN {$t$, and $\gamma_{i,k}, \forall i \in \Pi_k$}
    \end{algorithmic}
\end{algorithm}

\subsection{Convergence Analysis}

Before the convergence analysis, we first introduce some notations, as shown in Fig. \ref{fg1}.
For the stationary random scheduling policy $\Pi$, we use $\vec{w}_k^\Pi$ to denote $\vec{w}_{k}^{\vec{\Pi}_{[k]}}$.
Two auxiliary model parameter vectors are introduced, where $\vec{w}_k$ ($k\geq 1$) is used to denote the model parameter vector that is synchronized with $\vec{w}^\Pi_{k-1}$ at the beginning of the $k$-th round, and is updated by scheduling \emph{all devices} (i.e., $\Vec{w}_k\triangleq\frac{\sum_{i\in\mathcal{M}} D_i\vec{w}_{i,k}}{\sum_{i\in\mathcal{M}}D_i}$) in the $k$-th round. While $\vec{v}_k$ ($k\geq 1$) is used to denote the model parameter vector that is synchronized with $\vec{w}^\Pi_{k-1}$ at the beginning of the $k$-th round, and is updated by \emph{centralized gradient descent}. In the centralized gradient descent procedure of the $k$-th round, $\vec{v}_k$ is updated according to $\vec{v}_k\gets\vec{v}_k-\eta\nabla F(\vec{v}_k)$ for $\tau$ times.
\begin{figure}[!t]
\setlength{\abovecaptionskip}{2pt}
\setlength{\belowcaptionskip}{2pt}
\centering
\includegraphics[width=0.40\linewidth]{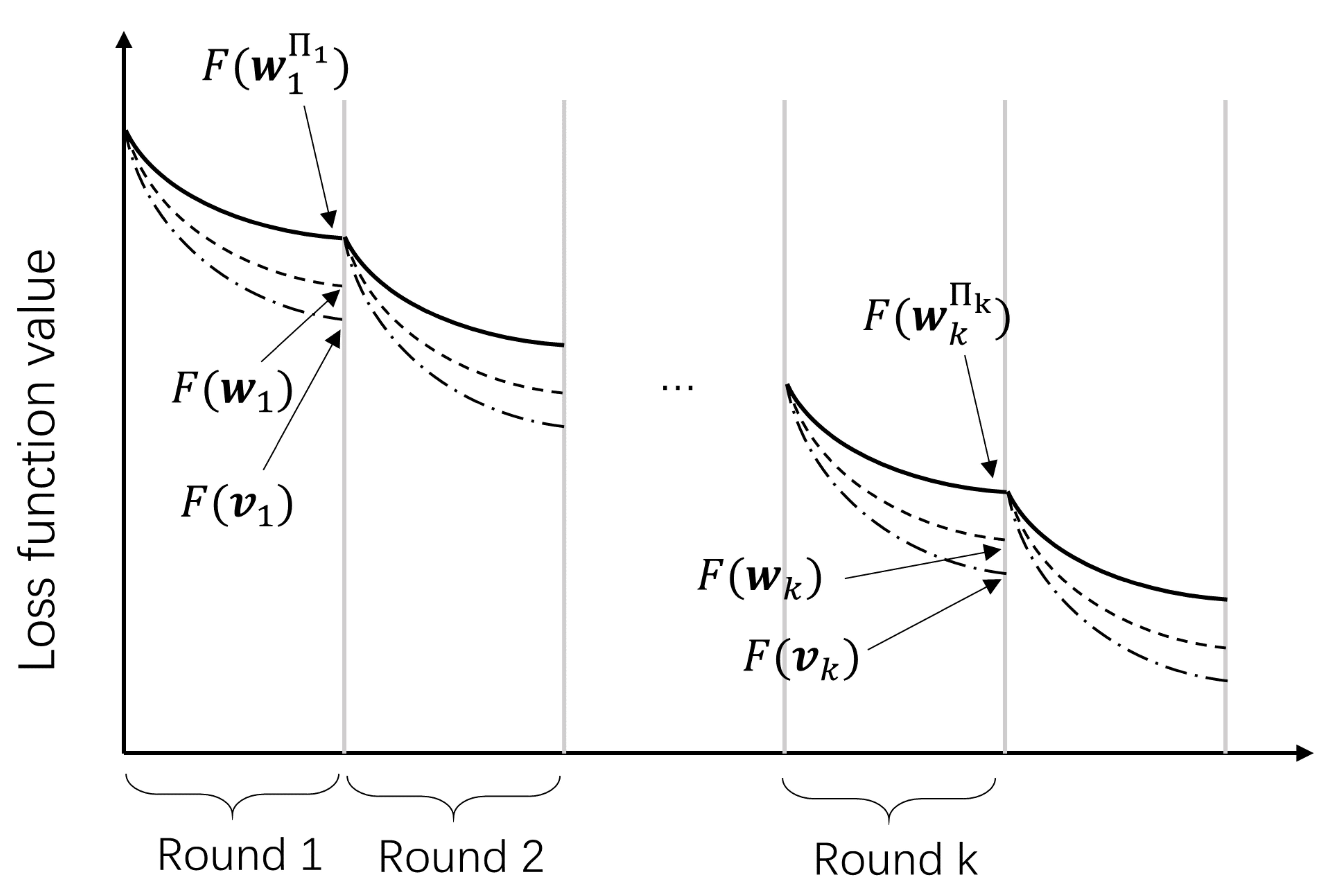}
\caption{Illustration of definitions of different parameter vectors.}
\label{fg1}
\vspace{-10pt}
\end{figure}

To facilitate the analysis, we make the following assumptions on the loss functions $F(\cdot)$.

\begin{assumption}\label{assumption1}
We assume the following for the loss functions of all devices:

\begin{itemize}
    \item $F_i(\vec{w})$ is convex.
    \item $F_i(\vec{w})$ is $\rho$-Lipschitz, i.e., $\norm{F_i(\vec{w})-F_i(\vec{w}')}\leq \rho\norm{\vec{w}-\vec{w}'}$, for any $\vec{w},\vec{w}'$.
    \item $F_i(\vec{w})$ is $\beta$-smooth, i.e., $\norm{\nabla F_i(\vec{w})-\nabla F_i(\vec{w}')}\leq \beta\norm{\vec{w}-\vec{w}'}$, for any $\vec{w},\vec{w}'$.
    \item For any $i$ and $\vec{w}$, the difference between the local gradient and the global gradient can be bounded by $\norm{\nabla F_i(\vec{w})-\nabla F(\vec{w})}\leq \delta_i$, and define $\delta \triangleq \frac{\sum_i D_i \delta_i}{D}$.
\end{itemize}
\end{assumption}

These assumptions are widely used in the literature of convergence analysis for FL \cite{wang2019adaptive, yang2019scheduling,yang2019energy,chen2019joint}, although the loss functions of some machine learning models (e.g., neural network) do not fully satisfy them, especially the convexity assumption. However, our experiment results show that the proposed scheduling policy works well even for the neural network.

To begin with, we derive the upper bound of the difference between the global model aggregated form a stationary random scheduling policy $\Pi$ (i.e., $\vec{w}^\Pi_k$) and $\vec{w}_k$.
\begin{definition}
We define a policy $\Pi$ as a stationary random scheduling policy if and only if $\Pi$ is a size-$|\Pi|$ subset, which is uniformly random sampled from all devices $\mathcal{M}$, and $|\Pi|$ stays constant during the whole training process.
\end{definition}
\begin{theorem}\label{thm1}

For any $k$ and stationary random scheduling policy $\Pi$ ($|\Pi| \geq 1$), we have
\begin{align}
\setlength\abovedisplayskip{3pt}
\setlength\belowdisplayskip{3pt}
    \mathbb{E}\left\{F(\vec{w}^\Pi_k)-F(\vec{w}_k)\right\}
    & \leq \frac{M-\vert\Pi\vert}{|\Pi|} \cdot \underbrace{\frac{\beta\sum_{i=1}^M \sum_{j=1}^M \left( D_i^2 D_j^2 \left(g_i^2(\tau)+g_j^2(\tau)\right)\right)}{2M(M-1)D_{\rm min}^2 D^2} }_{A}  \triangleq B(\Pi), \label{divergence}
\end{align}
where $D_{\rm min}\triangleq {\rm min}_{i\in\mathcal{M}}D_i$, $g_i(x)\triangleq\frac{\delta_i}{\beta}\left((\eta\beta+1)^x-1\right)$, and the expectation is taken over the randomness of $\Pi$.
\end{theorem}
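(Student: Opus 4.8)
The plan is to reduce the claim to a bound on the sampling variance $\mathbb{E}\,\norm{\vec{w}^\Pi_k-\vec{w}_k}^2$ and then evaluate that variance for uniform sampling without replacement. First I would exploit that, by construction, both $\vec{w}^\Pi_k$ and $\vec{w}_k$ are synchronized to the same $\vec{w}^\Pi_{k-1}$ at the start of round $k$ and every device performs the identical $\tau$ local gradient steps, so the two aggregates differ \emph{only} through which devices enter the weighted average; all randomness lives in $\Pi$. Starting from $\beta$-smoothness of $F$, I would write
\[
F(\vec{w}^\Pi_k)-F(\vec{w}_k)\le\langle\nabla F(\vec{w}_k),\,\vec{w}^\Pi_k-\vec{w}_k\rangle+\tfrac{\beta}{2}\norm{\vec{w}^\Pi_k-\vec{w}_k}^2,
\]
take the expectation over $\Pi$, and argue that uniform sampling makes the aggregate conditionally unbiased so that the first-order term vanishes (exactly when the $D_i$ are equal, and otherwise absorbed into the relaxation below), leaving $\mathbb{E}\{F(\vec{w}^\Pi_k)-F(\vec{w}_k)\}\le\tfrac{\beta}{2}\mathbb{E}\,\norm{\vec{w}^\Pi_k-\vec{w}_k}^2$.

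Next I would center the difference. Writing $\vec{b}_i\triangleq D_i(\vec{w}_{i,k}-\vec{w}_k)$, the definitions give $\vec{w}^\Pi_k-\vec{w}_k=\frac{1}{\sum_{i\in\Pi}D_i}\sum_{i\in\Pi}\vec{b}_i$ together with the crucial centering identity $\sum_{i\in\mathcal{M}}\vec{b}_i=\vec{0}$. I would lower-bound the random normalization by $\sum_{i\in\Pi}D_i\ge|\Pi|D_{\min}$ to pull it outside the expectation, reducing the problem to $\mathbb{E}\,\norm{\sum_{i\in\Pi}\vec{b}_i}^2$. For a size-$|\Pi|$ subset drawn uniformly at random, $\mathbb{P}[i\in\Pi]=|\Pi|/M$ and $\mathbb{P}[i,j\in\Pi]=|\Pi|(|\Pi|-1)/(M(M-1))$ for $i\ne j$; expanding the square and substituting the telescoping identity $\sum_{i\ne j}\langle\vec{b}_i,\vec{b}_j\rangle=-\sum_i\norm{\vec{b}_i}^2$ (a direct consequence of the centering) produces the finite-population variance factor $\frac{|\Pi|(M-|\Pi|)}{M(M-1)}$ multiplying $\sum_i\norm{\vec{b}_i}^2$, which already explains the $\frac{M-|\Pi|}{|\Pi|}$ and $\frac{1}{M(M-1)}$ structure of the target bound, with the $D_{\min}^2$ coming from the normalization step.

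It then remains to bound $\sum_i\norm{\vec{b}_i}^2=\sum_i D_i^2\norm{\vec{w}_{i,k}-\vec{w}_k}^2$ in terms of the $g_i(\tau)$. Here I would route every comparison through the centralized auxiliary model $\vec{v}_k$: by the triangle inequality $\norm{\vec{w}_{i,k}-\vec{w}_{j,k}}\le\norm{\vec{w}_{i,k}-\vec{v}_k}+\norm{\vec{v}_k-\vec{w}_{j,k}}\le g_i(\tau)+g_j(\tau)$, where $\norm{\vec{w}_{i,k}-\vec{v}_k}\le g_i(\tau)$ follows by unrolling the $\tau$ coupled local and centralized gradient steps and invoking $\beta$-smoothness together with the divergence bound $\delta_i$ (a lemma in the spirit of \cite{wang2019adaptive}, which I would establish separately by induction on the local step index). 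Rewriting $\vec{w}_{i,k}-\vec{w}_k=\frac{1}{D}\sum_j D_j(\vec{w}_{i,k}-\vec{w}_{j,k})$ and applying Jensen/Cauchy--Schwarz together with $(g_i+g_j)^2\le 2(g_i^2+g_j^2)$ converts the sum into the double sum $\sum_{i,j}D_i^2D_j^2(g_i^2(\tau)+g_j^2(\tau))/D^2$; combining this with the variance factor, the $D_{\min}^2$ from the normalization, and the $\tfrac{\beta}{2}$ prefactor gives exactly $B(\Pi)$.

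I expect the main obstacle to be the weighted, randomly normalized aggregation. Because $\vec{w}^\Pi_k$ is a ratio of random sums, it is not literally an unbiased estimator of $\vec{w}_k$ for unequal dataset sizes, so the vanishing of the first-order term and the legitimacy of the $\sum_{i\in\Pi}D_i\ge|\Pi|D_{\min}$ relaxation are the delicate steps that must be justified carefully, and they are also what determines the precise $D_i^2D_j^2/D^2$ form of the double sum. By contrast, the combinatorial subset expectation is mechanical once the centering identity $\sum_i\vec{b}_i=\vec{0}$ is in place, and the per-device bound $\norm{\vec{w}_{i,k}-\vec{v}_k}\le g_i(\tau)$ is a self-contained induction over the $\tau$ local updates.
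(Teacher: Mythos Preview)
Your proposal follows essentially the same route as the paper's proof: pass to $\tfrac{\beta}{2}\mathbb{E}\norm{\vec{w}^\Pi_k-\vec{w}_k}^2$ via smoothness, replace the random denominator by $|\Pi|D_{\min}$, expand the subset expectation using $\mathbb{P}[i\in\Pi]=|\Pi|/M$, $\mathbb{P}[i,j\in\Pi]=|\Pi|(|\Pi|-1)/(M(M-1))$ together with the centering identity $\sum_i D_i(\vec{w}_{i,k}-\vec{w}_k)=\vec{0}$ to obtain the $\frac{|\Pi|(M-|\Pi|)}{M(M-1)}$ factor, and finally bound $\norm{\vec{w}_{i,k}-\vec{w}_{j,k}}$ by routing through $\vec{v}_k$ and invoking the $g_i(\tau)$ bound (Lemma~3 of \cite{wang2019adaptive}). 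If anything you are more explicit than the paper about the first-order term in the smoothness inequality and about the random normalization; the paper simply writes $\mathbb{E}\{F(\vec{w}^\Pi_k)-F(\vec{w}_k)\}\le\tfrac{\beta}{2}\mathbb{E}\norm{\vec{w}^\Pi_k-\vec{w}_k}^2$ and proceeds.
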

\begin{proof}
See Appendix \ref{appendix1}.
\end{proof}

Note that we always have the learning rate $\eta > 0$, otherwise the gradient descent procedure becomes trivial.
We also have $\beta>0$ and $\delta_i>0$, otherwise the loss function and its gradient become trivial.
Therefore, $g_i(x) > 0$ for $x=1, 2, \dots, \tau$, and thus $A > 0$, where $A$ is defined in \eqref{divergence}.
It is obvious that $A$ is not related to $\Pi$, and $\frac{M-|\Pi|}{|\Pi|}$ decreases with $|\Pi|$.
Therefore, scheduling fewer devices leads to a larger upper bound of $\mathbb{E}\left\{F(\vec{w}^\Pi_k)-F(\vec{w}_k)\right\}$, as thus, a larger upper bound of $\mathbb{E}\left\{F(\vec{w}^\Pi_k)\right\}$.
This means that scheduling fewer devices slows down the convergence rate w.r.t. the number of rounds, which is consistent with conclusions from existing work \cite{zeng2019energy, nishio2019client}.
Furthermore, when $\Pi=\mathcal{M}$ (i.e., schedule all devices), $B(\Pi)$ achieves its lower bound zero, which is consistent with the definition of $\vec{w}_k$.,

{Then, we can combine Theorem \ref{thm1} with the convergence analysis in \cite{wang2019adaptive} to derive the following theorem, which bounds the difference between $\tilde{\vec{w}}$ and $\vec{w}^*$.
In Theorem \ref{thm2}, $\tilde{\vec{w}}$, defined in \eqref{def-tildew}, is the optimal model parameter that has the minimum global loss function value in the whole training process, and $\vec{w}^*$ is the true optimal model parameter that minimizes $F(\vec{w})$.}

\begin{theorem}\label{thm2}
When $\eta\leq\frac{1}{\beta}$ and $\Pi$ is a stationary random scheduling policy, the difference between $F(\tilde{\vec{w}})$ and $F(\vec{w}^*)$ satisfies:
\begin{equation}
\setlength\abovedisplayskip{3pt}
\setlength\belowdisplayskip{3pt}
    \mathbb{E}\left\{\frac{1}{F(\tilde{\vec{w}})-F(\vec{w}^*)}\right\} \geq \frac{1}{\epsilon_0 + \rho h(\tau) + B(\Pi)},
    \label{eqthm2}
\end{equation}
where $\epsilon_0 \triangleq \frac{1+\sqrt{1+4\eta\varphi K^2\tau\left(\rho h(\tau)+B(\Pi)\right)}}{2\eta\varphi K \tau}$,
$\varphi \triangleq \omega\left(1-\frac{\beta\eta}{2}\right)$, $\omega\triangleq {\rm{min}}_k\frac{1}{\norm{\vec{w}^\Pi_k-\vec{w}^*}}$,
$h(x) \triangleq \frac{\delta}{\beta}((\eta\beta+1)^x-1)- \eta\delta x$, and the expectation is taken over the randomness of $\Pi$.
\end{theorem}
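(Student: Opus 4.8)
The plan is to bound the optimality gap $F(\tilde{\vec{w}})-F(\vec{w}^*)$ by chaining the three parameter vectors of Fig.~\ref{fg1} and then running a reciprocal-recursion argument in the spirit of \cite{wang2019adaptive}, with the extra scheduling error supplied by Theorem~\ref{thm1}. For an arbitrary round $k$ I would first split the gap of the aggregated model as
\[ F(\vec{w}_k^\Pi)-F(\vec{w}^*)=\big[F(\vec{w}_k^\Pi)-F(\vec{w}_k)\big]+\big[F(\vec{w}_k)-F(\vec{v}_k)\big]+\big[F(\vec{v}_k)-F(\vec{w}^*)\big], \]
i.e.\ into the scheduling error, the local-update error, and the centralized gap. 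Taking expectation over $\Pi$, Theorem~\ref{thm1} bounds the first bracket by $B(\Pi)$. For the second bracket, $\vec{w}_k$ (full-device aggregation) and $\vec{v}_k$ (centralized gradient descent) share the synchronization point $\vec{w}_{k-1}^\Pi$, so the $\beta$-smoothness deviation recursion of \cite{wang2019adaptive} gives $\norm{\vec{w}_k-\vec{v}_k}\le h(\tau)$, and the $\rho$-Lipschitz assumption then yields $F(\vec{w}_k)-F(\vec{v}_k)\le\rho h(\tau)$. Writing $e\triangleq\rho h(\tau)+B(\Pi)$, this reduces the task to tracking the centralized gap up to an additive error $e$ per round.

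Next I would establish a per-round reciprocal recursion for the centralized gap. Using convexity, $\beta$-smoothness and $\eta\le\frac{1}{\beta}$, one step of centralized gradient descent decreases the gap by at least $\eta\left(1-\frac{\beta\eta}{2}\right)\norm{\nabla F(\vec{v})}^2$; lower-bounding the gradient norm by convexity together with the definition of $\omega$ turns this into a reciprocal increment of at least $\eta\varphi$ per local step, i.e.\ $\eta\tau\varphi$ over the $\tau$ updates of one round. Writing $\theta_k\triangleq F(\vec{w}_k^\Pi)-F(\vec{w}^*)$ and noting that the centralized sequence starts each round at $\vec{w}_{k-1}^\Pi$ (so its initial gap is $\theta_{k-1}$), I combine this increment with the deviation $\theta_k\le\big[F(\vec{v}_k)-F(\vec{w}^*)\big]+e$ to obtain, in expectation, the recursion $\frac{1}{\theta_k-e}\ge\frac{1}{\theta_{k-1}}+\eta\tau\varphi$, valid while the centralized gap stays positive.

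The crux is to telescope this recursion into the stated bound, which I would do by contradiction. Suppose $\theta_k>\epsilon_0+e$ for every $k$. Rewriting the recursion as $\frac{1}{\theta_k}\ge\frac{1}{\theta_{k-1}}+\eta\tau\varphi-\frac{e}{\theta_k(\theta_k-e)}$ and using the hypothesis to bound the coupling term by $\frac{e}{\epsilon_0^2}$, summing over $k=1,\dots,K$ gives $\frac{1}{\theta_K}>K\eta\tau\varphi-\frac{Ke}{\epsilon_0^2}$. The value $\epsilon_0$ is exactly the positive root of $\eta\varphi K\tau\,\epsilon_0^2-\epsilon_0-Ke=0$, equivalently $K\eta\tau\varphi=\frac{1}{\epsilon_0}+\frac{Ke}{\epsilon_0^2}$, so the right-hand side collapses to $\frac{1}{\epsilon_0}$, forcing $\theta_K<\epsilon_0$ and contradicting the hypothesis. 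Hence at least one round satisfies $\theta_k\le\epsilon_0+e$, and since $\tilde{\vec{w}}$ is by \eqref{def-tildew} the best model over all rounds, $F(\tilde{\vec{w}})-F(\vec{w}^*)=\min_k\theta_k\le\epsilon_0+\rho h(\tau)+B(\Pi)$.

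Finally, because Theorem~\ref{thm1} controls the scheduling error only in expectation, the argument must be carried out on the expected gaps $\mathbb{E}\{\theta_k\}$: the contradiction then yields $\min_k\mathbb{E}\{\theta_k\}\le\epsilon_0+e$, and combining $\mathbb{E}\{\min_k\theta_k\}\le\min_k\mathbb{E}\{\theta_k\}$ with $F(\tilde{\vec{w}})-F(\vec{w}^*)=\min_k\theta_k$ gives $\mathbb{E}\{F(\tilde{\vec{w}})\}-F(\vec{w}^*)\le\epsilon_0+\rho h(\tau)+B(\Pi)$; applying Jensen's inequality to the convex map $x\mapsto1/x$ then upgrades this to \eqref{eqthm2}. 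I expect the main obstacle to be exactly this expectation bookkeeping together with matching the coupling term to the quadratic defining $\epsilon_0$: the positivity condition $\theta_k>e$ needed for the reciprocals must be secured from $\eta\le\frac{1}{\beta}$ and the threshold structure, and any bound on $\frac{e}{\theta_k(\theta_k-e)}$ looser than $\frac{e}{\epsilon_0^2}$ would destroy the exact cancellation that produces $\frac{1}{\epsilon_0}$.
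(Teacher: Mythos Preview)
Your proposal is correct and follows essentially the same route as the paper: the reciprocal recursion from \cite{wang2019adaptive} augmented by the per-round error $e=\rho h(\tau)+B(\Pi)$, a contradiction pinned to the positive root $\epsilon_0$ of the same quadratic, and Jensen at the end. The one organizational difference is that the paper tracks $\theta_k\triangleq F(\vec{v}_k)-F(\vec{w}^*)$ rather than your $\theta_k=F(\vec{w}_k^\Pi)-F(\vec{w}^*)$ and packages the recursion as a separate lemma; this keeps the reciprocal recursion purely deterministic and confines the scheduling randomness to the handoff term $\tfrac{1}{F(\vec{w}_k^\Pi)-F(\vec{w}^*)}-\tfrac{1}{\theta_k}$, whose denominator is bounded below by $\epsilon^2$ under the lemma's hypotheses so that Theorem~\ref{thm1} can be applied directly to the numerator---which is precisely where the expectation difficulty you flag dissolves cleanly.
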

\begin{proof}
See Appendix \ref{appendix2}.
\end{proof}

Theorem \ref{thm2} quantifies the trade-off between the latency per round and the number of required rounds.
Scheduling more devices increases the latency per round, and thus decreases the number of possible rounds within the given training time budget $T$ (i.e., $K$), while a smaller $K$ can decrease the lower bound of $\mathbb{E}\left\{\frac{1}{F(\tilde{\vec{w}})-F(\vec{w}^*)}\right\}$.
At the same time, scheduling more devices decreases the value of $B(\Pi)$ as shown by Theorem \ref{thm1}, while a smaller $B(\Pi)$ can increase the lower bound of $\mathbb{E}\left\{\frac{1}{F(\tilde{\vec{w}})-F(\vec{w}^*)}\right\}$.
As a result, the scheduling policy should be carefully optimized to balance the trade-off between the latency per round and the number of required rounds, in order to minimize the loss function of the optimal global model (i.e., $F(\tilde{\vec{w}})$).

\subsection{Device Scheduling Algorithm}
In real wireless networks, the local computation latency $t_{i,k}^\text{cp}$ and wireless channel state $h_{i,k}$ can vary in different rounds $k$, due to the fluctuation of the wireless channels and device computation capabilities. Therefore, at the $k$-th round, $t_{i,k'}^\text{cp}$ and $h_{i,k'}$ for $k'>k$ are unknown, making the constraint \eqref{C31} in \ref{P3} intractable because of the unknown $t^*_{k'}(\Pi_{k'})$ for $k'>k$.
To address this issue, we solve \ref{P3} myopically.
Consider an arbitrary round $k$ and an arbitrary scheduling policy $\Pi_k$, we approximately view that $\Pi_k$ is used in the whole training process, and thus the number of total rounds can be approximated by $\hat{K} = \left\lfloor\frac{T}{t^*_k(\Pi_k)}\right\rfloor$, where $\lfloor \cdot \rfloor$ denotes floor function.
Furthermore, for a given global loss function, $F(\vec{w}^*)$ is a constant, and thus minimizing $F(\tilde{\vec{w}})$ is equivalent to maximizing $\frac{1}{F(\tilde{\vec{w}})-F(\vec{w}^*)}$.
Since the learning rate $\eta$ can be chosen small enough to satisfy $\eta\leq\frac{1}{\beta}$, the objective of \ref{P3} can be approximated by maximizing the lower bound of $\mathbb{E}\left\{\frac{1}{F(\tilde{\vec{w}})-F(\vec{w}^*)}\right\}$ according to Theorem 3, which is equivalent to minimizing the denominator of the right hand side of \eqref{eqthm2}.
Consequently, \ref{P3} can be approximated by the following myopic problem in each round:
\begin{align}
\setlength\abovedisplayskip{3pt}
\setlength\belowdisplayskip{3pt}
    \underset{\Pi_k} {\text{min}} \quad & \frac{1+\sqrt{1+4\eta\varphi \hat{K}^2\tau\left(\rho h(\tau)+B(\Pi_k)\right)}}{2\eta\varphi \hat{K} \tau} + \rho h(\tau) +B(\Pi_k)  \tag{P4}\label{P5} \\
    \text{s.t.} \quad \, & \hat{K} = \left\lfloor\frac{T}{t^*_k(\Pi_k)}\right\rfloor, \tag{C4.1}\label{C51} \\
    & \Pi_k \subset \mathcal{M}. \tag{C4.2}\label{C52}
\end{align}
\if
Therefore, \ref{P3} can be approximated by the myopic problem in each round:
\begin{align}
    \underset{\Pi_k} {\text{min}} \quad &  F(\tilde{\vec{w}})  \tag{P4}\label{P4} \\
    \text{s.t.} \quad\, &
    \hat{K} = \left\lfloor\frac{T}{t^*_k(\Pi_k)}\right\rfloor, \tag{C4.1}\label{C41} \\
    & \Pi_k \subset \mathcal{M}. \tag{C4.2}\label{C42}
\end{align}
For a given global loss function, $F(\vec{w}^*)$ is a constant, and thus minimizing $F(\tilde{\vec{w}})$ is equivalent to maximizing $\frac{1}{F(\tilde{\vec{w}})-F(\vec{w}^*)}$.
Furthermore, the learning rate $\eta$ can be chosen small enough to satisfy $\eta\leq\frac{1}{\beta}$.
As a result, the objective of \ref{P4} can be approximated by maximizing the lower bound of $\mathbb{E}\left\{\frac{1}{F(\tilde{\vec{w}})-F(\vec{w}^*)}\right\}$ according to Theorem 3, which is equivalent to minimizing the denominator of the right hand side of \eqref{eqthm2}:
\begin{align}
    \underset{\Pi_k} {\text{min}} \quad & \frac{1+\sqrt{1+4\eta\varphi \hat{K}^2\tau\left(\rho h(\tau)+B(\Pi_k)\right)}}{2\eta\varphi \hat{K} \tau} + \rho h(\tau) +B(\Pi_k)  \tag{P5}\label{P5} \\
    \text{s.t.} \quad \, & \hat{K} = \left\lfloor\frac{T}{t^*_k(\Pi_k)}\right\rfloor, \tag{C5.1}\label{C51} \\
    & \Pi_k \subset \mathcal{M}. \tag{C5.2}\label{C52}
\end{align}
\fi

\begin{algorithm}[!t]
    \caption{Greedy Scheduling Algorithm}
    \label{greedy}
    \begin{algorithmic}[1]
    \STATE {Initialize $\Pi \gets \emptyset$}
    \STATE {Greedy scheduling: $x \gets \argmin \limits _{i\in \mathcal{M}} t^*(\{i\})$, with $t^*(\cdot)$ given by Alg. \ref{bandwidthallocation}}
    \STATE {{Update $\mathcal{M} \gets \mathcal{M} \setminus \{x\}$, $\Pi \gets \Pi \cup \{x\}$}}
    \STATE {{Estimate $\hat{K} = \left\lfloor\frac{T}{t^*(\{ x\})}\right\rfloor$ and $C=\frac{1+\sqrt{1+4\eta\varphi \hat{K}^2\tau\left(\rho h(\tau)+B(\Pi)\right)}}{2\eta\varphi \hat{K} \tau}+ \rho h(\tau) +B(\Pi)$}}
    \WHILE {$|\mathcal{M}|>0$}
        \STATE {Greedy scheduling: $x \gets \argmin \limits _{i\in \mathcal{M}} t^*(\Pi \cup \{i\}) $, with $t^*(\cdot)$ given by Alg. \ref{bandwidthallocation}}
        \STATE {{Estimate $\hat{K} = \left\lfloor\frac{T}{t^*(\Pi\cup\{ x\})}\right\rfloor$ and $C'=\frac{1+\sqrt{1+4\eta\varphi \hat{K}^2\tau\left(\rho h(\tau)+B(\Pi \cup \{x\})\right)}}{2\eta\varphi \hat{K} \tau}+ \rho h(\tau) + B(\Pi \cup \{x\})$}}
        \IF {$C'> C$}
            \STATE {Break}
        \ELSE
            \STATE {Update $\mathcal{M} \gets \mathcal{M} \setminus \{x\}$, $\Pi \gets \Pi \cup \{x\}$, and $C\gets C'$}
        \ENDIF
    \ENDWHILE
    \RETURN {$\Pi$}
    \end{algorithmic}
\end{algorithm}
\ref{P5} is still a combinatorial optimization problem due to the constraint \eqref{C52}, which is hard to solve. Therefore we propose a greedy algorithm (Alg. \ref{greedy}) to schedule devices.
{In steps 2-3 of Alg. \ref{greedy}, the round latency of scheduling each unscheduled device is given by Alg. \ref{bandwidthallocation}, based on which we choose the device with the minimum latency into the scheduled devices set. Consequently, we initialize the value of the objective function of \ref{P5} in step 4. Then, a similar process is iteratively performed in steps 6-7, until the objective function of \ref{P5} starts to increase or all devices are scheduled.}
The complexity of Alg. \ref{greedy} is on the order of $\mathcal{O}(|\mathcal{M}|^3)$ (because of calling Alg. \ref{bandwidthallocation} for $\mathcal{O}(|\mathcal{M}|^2)$ times), which is much more efficient than the naive brute force search algorithm on the order of $\mathcal{O}(2^{|\mathcal{M}|})$.

However, due to the unknown optimal model $\vec{w}^*$, it is non-trivial to analytically estimate the value of $\varphi$, and thus we treat $\varphi$ as a system parameter that remains fixed throughout the training process. It is shown in the experiments that a fixed $\varphi$ performs well across different system settings like data distributions and cell radius, while the searching for an appropriate value of $\varphi$ is not difficult as well.

\subsection{The Whole Policy}
In this subsection, we propose the complete procedure of the wireless FL with our fast converge scheduling policy (as shown in Alg. \ref{WFL}), which enables the BS to schedule devices in real-time and minimizes the global loss function within the training time budget.
\begin{algorithm}
    \caption{Wireless FL with Fast Converge Scheduling Policy}
    \label{WFL}
    \begin{algorithmic}[1]
    \STATE {Initialize $\vec{w}_0^\Pi$ and $\tilde{\vec{w}}$ as a constant or random vector}
    \STATE {Initialize $t \gets 0$, $\hat{\Vec{\rho}}\gets[\hat{\rho}_1, \hat{\rho}_2, \dots, \hat{\rho}_M]$, $\hat{\Vec{\beta}}\gets[\hat{\beta}_1, \hat{\beta}_2, \dots, \hat{\beta}_M] $, and $\hat{\Vec{\delta}}\gets[\hat{\delta}_1, \hat{\delta}_2, \dots, \hat{\delta}_M] $}
    \FOR{$k=1,2,\dots$}
        \STATE {Estimate $\hat{\rho} = \frac{\sum_{i\in\mathcal{M}}D_i\hat\rho_i}{D}$, $\hat{\beta} = \frac{\sum_{i\in\mathcal{M}}D_i\hat\beta_i}{D}$, and $\hat{\delta} = \frac{\sum_{i\in\mathcal{M}}D_i\hat\delta_i}{D}$}
        \STATE {Call Alg. \ref{greedy} to derive the scheduling policy $\Pi_k$}
        \STATE {Call Alg. \ref{bandwidthallocation} to derive the bandwidth allocation $\gamma_k$ and the optimal round latency $t^*_k(\Pi_k)$}
        \STATE {$t\gets t + t^*_k(\Pi_k)$}
        \IF{$t > T$}
            \BREAK
        \ENDIF
        \STATE {The BS broadcasts the global model $\vec{w}_{k-1}^\Pi$ to all scheduled devices}
        \FOR{each scheduled device $i\in \Pi_k$ in parallel}
            \STATE {Receive $\vec{w}_{k-1}^\Pi$ and set $\vec{w}_{i,k}(0) \gets \vec{w}_{k-1}^\Pi$}
            \STATE {Perform local model update for $\tau$ times according to \eqref{localupdates}}
            \STATE {Estimate $\hat{\rho}_i = \frac{\norm{F_i(\vec{w}_{k-1}^\Pi)-F_i(\vec{w}_{i,k})}}{\norm{\vec{w}_{k-1}^\Pi-\vec{w}_{i,k}}}$, and $\hat{\beta}_i = \frac{\norm{\nabla F_i(\vec{w}_{k-1}^\Pi)- \nabla F_i(\vec{w}_{i,k})}}{\norm{\vec{w}_{k-1}^\Pi-\vec{w}_{i,k}}}$}
            \STATE {Send $\vec{w}_{i,k}$, $\hat{\rho}_i$, $\hat{\beta}_i$, and $ F_i(\vec{w}_{k-1}^\Pi)$ to the BS}
        \ENDFOR
        \STATE {Receive $\vec{w}_{i,k}$ from each scheduled device and update the global model according to \eqref{globalaggregation}}
        \STATE {Receive $\hat{\rho}_i$ and $\hat{\beta}_i$ from each scheduled device and update the corresponding terms in $\hat{\Vec{\rho}}$ and $\hat{\Vec{\beta}}$, respectively}
        \STATE {Estimate $\nabla F_i(\vec{w}_{k-1}^\Pi)=\frac{\vec{w}_{k-1}^\Pi-\vec{w}_{i,k}}{\tau \eta}$}
        \STATE {Compute $\nabla F(\vec{w}_{k-1}^\Pi) = \frac{\sum_{i\in\Pi_k} D_i \nabla F_i(\vec{w}_{k-1}^\Pi)}{\sum_{i\in\Pi_k} D_i}$,  estimate $\hat{\delta_i} = \norm{\nabla F_i(\vec{w}_{k-1}^\Pi)-\nabla F(\vec{w}_{k-1}^\Pi)}$ for each i and update the corresponding term in $\hat{\vec{\delta}}$}
        \STATE {Receive $F_i(\vec{w}_{k-1}^\Pi)$ from each scheduled device and compute $F(\vec{w}_{k-1}^\Pi)=\frac{\sum_{i\in\Pi_k} D_i  F_i(\vec{w}_{k-1}^\Pi)}{\sum_{i\in\Pi_k} D_i}$}
        \IF {$F(\vec{w}_{k-1}^\Pi)<F(\tilde{\vec{w}})$}
            \STATE {$\tilde{\vec{w}} = \vec{w}_{k-1}^\Pi$}
        \ENDIF
    \ENDFOR
    \end{algorithmic}
\end{algorithm}

In Alg. \ref{WFL}, steps 1-2 are the initialization phase, initializing $\vec{w}_0^\Pi$, $\tilde{\vec{w}}$ for the global model, and $\hat{\Vec{\rho}}$, $\hat{\Vec{\beta}}$, $\hat{\Vec{\delta}}$, which are used to record the real-time estimations of the convergence property parameters.
In each round, Alg. \ref{greedy} is called to obtain the device scheduling policy based on the estimated $\hat{\Vec{\rho}}$, $\hat{\Vec{\beta}}$ and $\hat{\Vec{\delta}}$ (step 5).
Then in step 6, Alg. \ref{bandwidthallocation} is called to obtain the optimal bandwidth allocation for scheduled devices and the corresponding round latency {\footnote{{When the devices are updating the local models, they can send pilot signals to the edge server to estimate the channel and inform the edge server of their progress of local computation with low communication overhead. Therefore, perfect information of $t^\text{cp}_{i,k}$ and $h_{i,k}$ is assumed to be known unless otherwise specified.}}}.
We update the accumulated training latency and check if it exceeds the budget $T$ in steps 7-9.
Despite the regular FL local update procedure (steps 13-14), each scheduled device $i\in\Pi_k$ also needs to estimate $\rho_i$ and $\beta_i$ based on the local loss and gradient of $\vec{w}_{k-1}^\Pi$ and $\vec{w}_{i,k}$ according to step 15.
Then in step 16, the updated local models, the estimations, and the loss function values are sent to the BS.
The BS receives the uploaded local models, based on which the global model is updated according to \eqref{globalaggregation} (step 18), and updates the estimation records of $\hat{\Vec{\rho}}$ and $\hat{\Vec{\beta}}$ (step 19).
We update $\hat{\vec{\delta}}$ in a similar way in steps 20-21 and update $\tilde{\vec{w}}$ in steps 22-25.
Note that Alg. \ref{greedy} needs the estimated $\rho_i$, $\beta_i$, and $\delta_i$ for all devices to compute the convergence bound (according to step 7 in Alg. \ref{greedy}, and \eqref{divergence}), while only the devices that have been scheduled in the last round have the up-to-date estimations.
To address this issue, for each device that has not been scheduled in the last round, we use the latest estimation in the past rounds to approximate the up-to-date estimation.
Therefore, $\hat{\Vec{\rho}}$, $\hat{\Vec{\beta}}$, and $\hat{\Vec{\delta}}$ are used to record the estimations and estimate $\hat{\rho}$, $\hat{\beta}$, and $\hat{\delta}$ according to step 4.

In an arbitrary round $k$, the additional computational complexity of Alg. \ref{WFL} at the BS compared to the conventional FL mainly consists of three parts:
1) the computational complexity of Alg. \ref{greedy}, which is $\mathcal{O}(|\mathcal{M}|^3)$;
2) the computational complexity of Alg. \ref{bandwidthallocation}, which is $\mathcal{O}\left(|\Pi_k|\text{log}_2\left(\frac{t_\text{up}}{\varepsilon}\right)\right)$;
3) the computational complexity of maintaining $\hat{\Vec{\rho}}$, $\hat{\Vec{\beta}}$, and $\hat{\Vec{\delta}}$, which is $\mathcal{O}(|\Pi_k|)$.
Because $|\Pi_k|\leq|\mathcal{M}|$, the total additional computation complexity at the BS is $\mathcal{O}(|\mathcal{M}|^3)$ in each round.
While the additional computation complexity at each device $i$ is $\mathcal{O}(1)$ in each round, due to the estimation of $\rho_i$ and $\beta_i$.
For the signaling overhead, compared to the conventional FL, each scheduled device needs to send 3 extra scalars to the BS in each round (i.e., $\hat{\rho}_i$, $\hat{\beta}_i$, and $F_i(\vec{w}_{k-1}^\Pi)$) as shown in step 16, which is negligible compared to sending the high-dimensional local updated model $\vec{w}_{i,k}$.

\section{Experiment Results}
In this section, we evaluate the performance of FL under the proposed scheduling policy.

\subsection{Environment and FL Setups}
Unless otherwise specified, we consider an FL system that consists of $M=20$ devices located in a cell of radius $R=600$ m and a BS located at the center of the cell.
Assume that all devices are uniformly distributed in the cell at the beginning of each round to reflect mobility {\cite{zhu2018low}}.
The wireless bandwidth is $B=20$ MHz, and the path loss exponent is $\alpha = 3.76$.
The transmit power of devices is set to be $P_i = 10$ dBm, and the power spectrum density of the additive Gaussian noise is $N_0 = -114$ dBm/MHz.

{We evaluate the training performance of the proposed policy under two well-known learning tasks, the MNIST dataset \cite{lecun1998gradient} for handwritten digits classification and the CIFAR-10 dataset \cite{krizhevsky2014cifar} for image classification.
The MNIST dataset has 60,000 training images and 10,000 testing images of the 10 digits, and the CIFAR-10 dataset has 50,000 training images and 10,000 images of 10 types of objects.
We accept the common assumption that each device has equal amount of training data samples and the local training datasets are non-overlapping with each other\cite{wang2019adaptive,zhu2018low}.}
Different training data distributions are considered, including \textit{i.i.d.} case and \textit{non-i.i.d.} cases.
{For the i.i.d. dataset, the original training dataset is randomly partitioned into 20 pieces and each device is assigned a piece.
While for the non-i.i.d. cases, the original training dataset is first partitioned into 10 pieces according to the label, and each piece with the same label is then randomly partitioned into $2l$ shards (i.e., $20l$ shards in total).
Finally, each device is assigned $l$ shards with different labels.
The parameter $l$ captures the non-i.i.d. level of local datasets, where smaller $l$ corresponds to a higher non-i.i.d. level.
Following \cite{amiri2020update, sun2019energy}, we train a multilayer perceptron (MLP) model with a single hidden layer with 64 nodes, and use ReLU activation.}
{The mini-batch size is set to be 128 for the local model update, and each scheduled device performs $\tau = 5$ local updates between two adjacent global aggregations.
The learning rate $\eta$ is set to be $0.01$ for MNIST and $0.02$ for CIFAR-10.}
{
The MLP model has 50,816 multiply-and-accumulate (MAC) operations for MNIST.
Assuming that all devices are of the same kind, having maximum CPU frequency of 1 GHz/s and can process one MAC operation in each CPU cycle, and thus we set $a=0.5$ ms/sample and further set $\mu=\frac{1}{a}$ for the computation latency model
\cite{reisizadeh2019coded}.
The total training time budget $T$ is set to be $60$ seconds for MNIST and $200$ seconds for CIFAR-10, and the initial values of $\hat{\rho}_i$, $\hat{\beta}_i$, and $\hat{\delta}_i$ are 1.5, 12, and 2, respectively.}

\subsection{Evaluation of the Fast Converge Scheduling Policy}
\begin{figure}[!t]
\setlength{\abovecaptionskip}{2pt}
\setlength{\belowcaptionskip}{2pt}
\centering
\subfloat[]
{\includegraphics[width=0.42\linewidth]{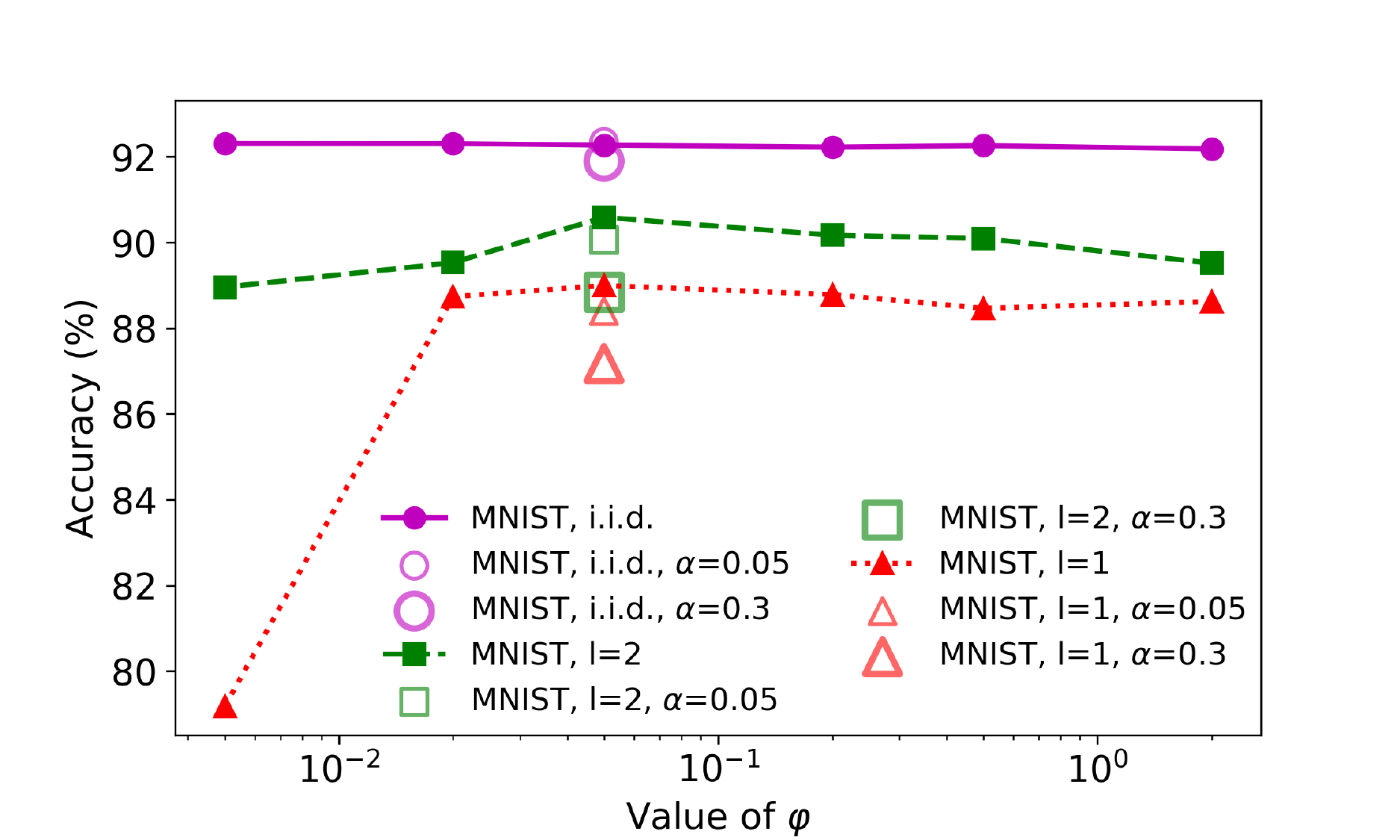}
\label{phi-acc}}
\hfil
\subfloat[]
{\includegraphics[width=0.42\linewidth]{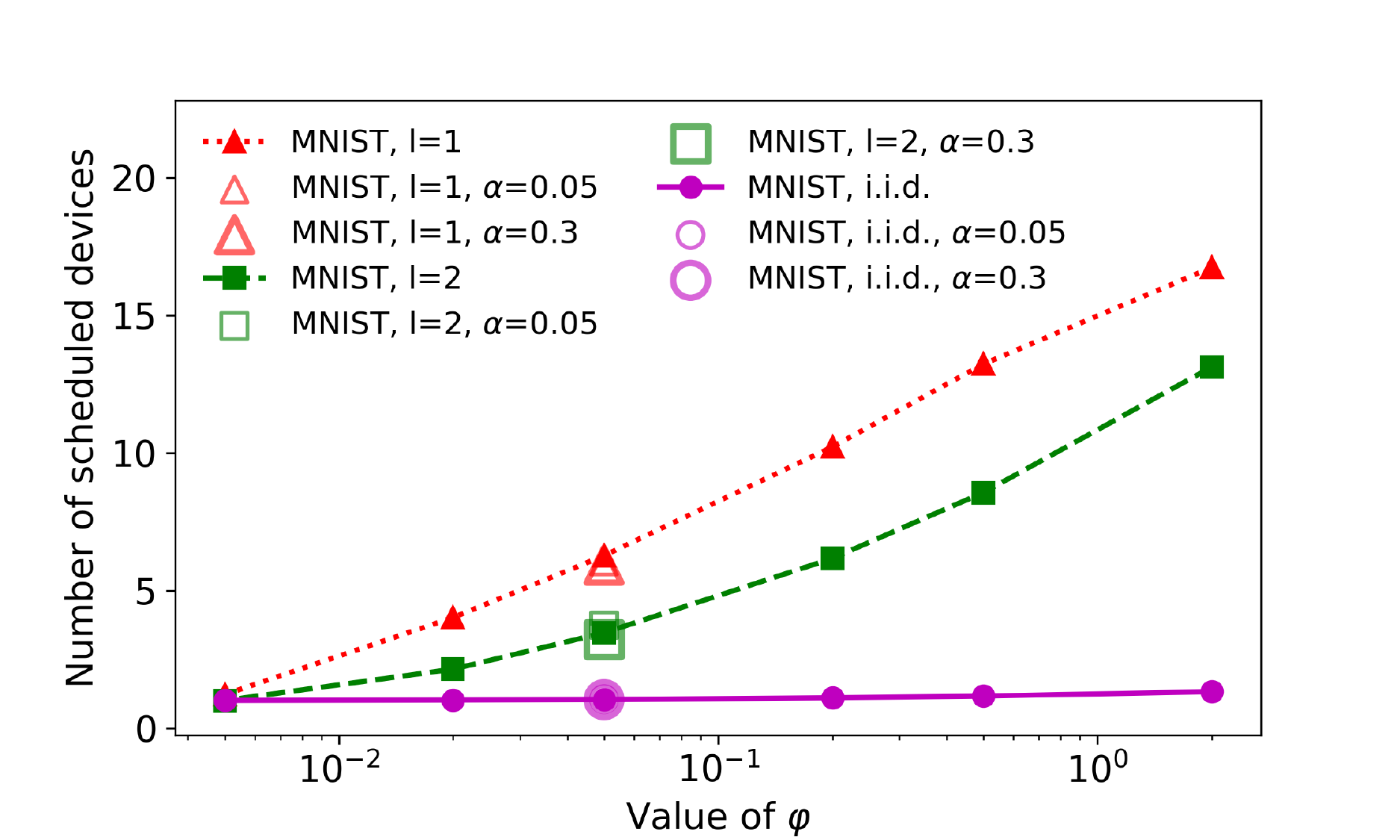}
\label{phi-scheduleddevices}}

\addvspace{-12pt}
\subfloat[]{
\includegraphics[width=0.42\linewidth]{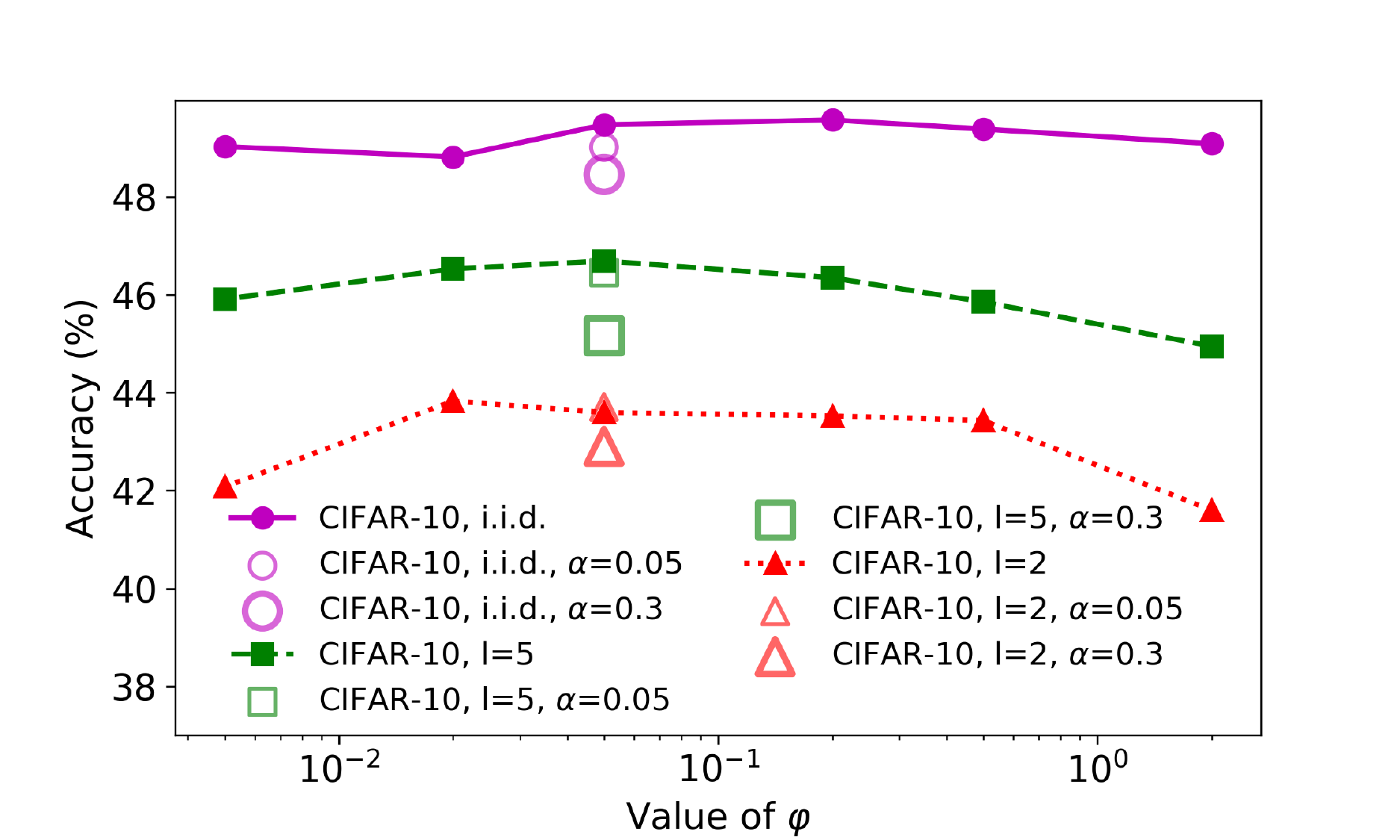}
\label{cifar-phi-acc}}
\hfil
\subfloat[]{
\includegraphics[width=0.42\linewidth]{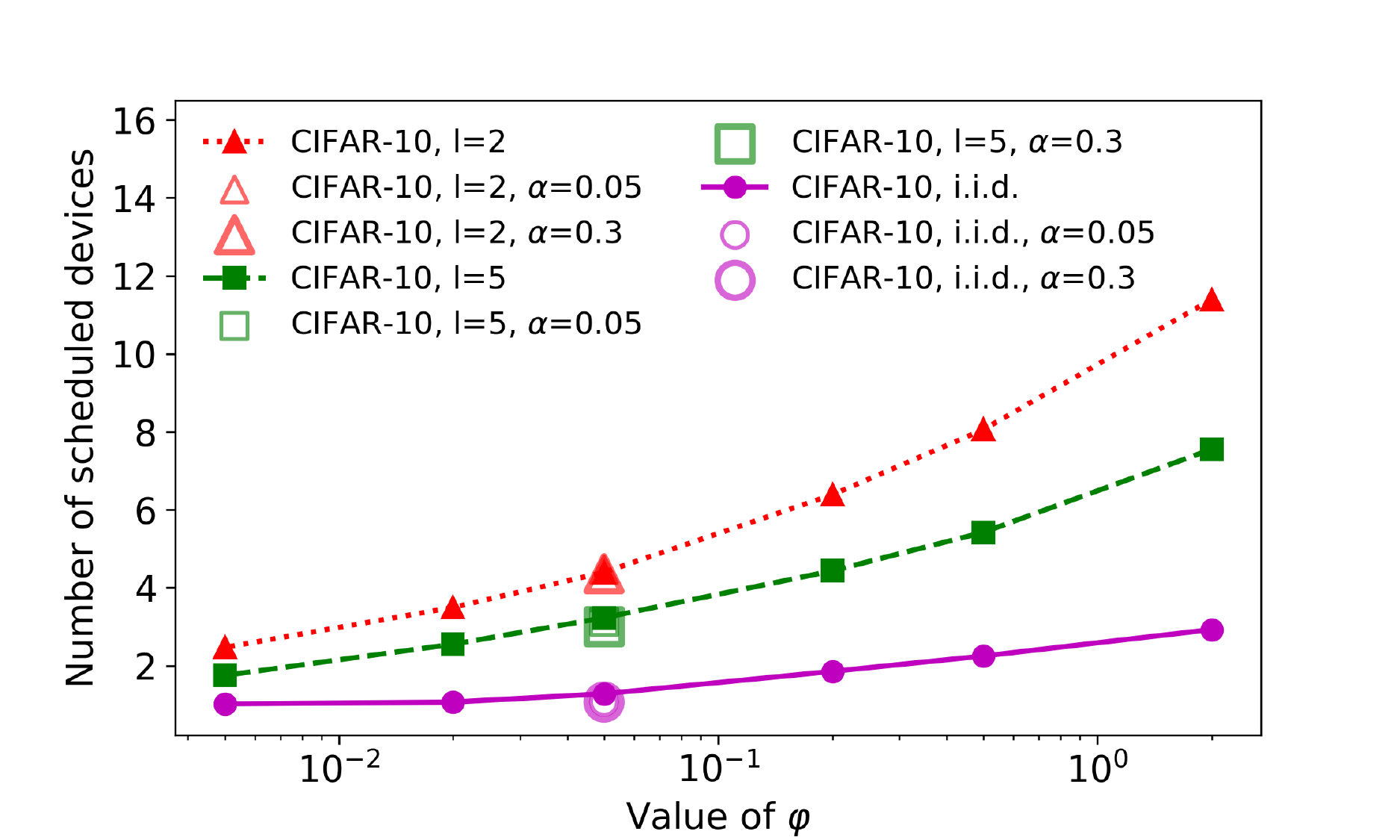}
\label{cifar-phi-scheduleddevices}}
\caption{{Impact of $\varphi$ on the accuracy and number of scheduled devices of proposed FC scheduling.
(a), (c) show the highest achievable accuracy within the training time budget $T$ v.s. the value of $\varphi$ on MNIST and CIFAR-10, respectively.
(b), (d) show the average number of scheduled devices v.s. the value of $\varphi$ on MNIST and CIFAR-10, respectively.
The markers represent the results when the estimation errors of the computation latency $t^\text{cp}_{i,k}$ and the channel state $h_{i, k}$ occur.
$T$ is set to be $60$ seconds for MNIST and $200$ seconds for CIFAR-10. Results are averaged over 5 trails.}}
\label{fig1}
\vspace{-10pt}
\end{figure}

As mentioned in Section III.C, the system parameter $\varphi$ needs to be determined through experiments, thus we study the effects of $\varphi$ first.
{Fig. \ref{fig1}(a), (c) show the highest achievable accuracy within the training time budget v.s. the value of $\varphi$ on MNIST and CIFAR-10, respectively, and Fig. \ref{fig1}(b), (d) show the average number of scheduled devices v.s. the value of $\varphi$ on MNIST and CIFAR-10, respectively.
From Fig. \ref{fig1}(b), (d), we notice that the proposed fast converge scheduling policy (denoted by FC) schedules more devices with larger $\varphi$.}
This finding is due to the following reason.
According to Alg. \ref{greedy}, FC schedules devices by minimizing the objective function of \ref{P5}, in which the term $B(\Pi)$ is not related to $\varphi$ and other terms decrease with $\varphi$.
Therefore, minimizing $B(\Pi)$ is more important with larger $\varphi$, which requires scheduling more devices.
{Furthermore, Fig. \ref{fig1}(a) shows that when $\varphi=0.05$, FC has the best performance in terms of the highest achievable accuracy, which is 92.3\%, 90.6\%, and 89.0\% for i.i.d. and non-i.i.d. data distributions with $l=2$ and $l=1$ on MNIST.
While on CIFAR-10, $\varphi=0.05$ achieves good performance as shown in Fig. \ref{fig1}(c), confirming that FC adapts to different datasets.}
Therefore, we set $\varphi=0.05$ in the following experiments.
{Moreover, it is shown in Fig. \ref{fig1}(a), (c) that the convergence performance is not sensitive to the value of $\varphi$ as long as $0.02\leq \varphi\leq 0.5$, indicating that large step can be taken to reduce the searching cost for $\varphi$ in practice.
The performance of the proposed policy with estimation errors of the computation latency $t^\text{cp}_{i,k}$ and the channel state $h_{i, k}$ is also reported in Fig. \ref{fig1}.
We simulate the estimation error by an Gaussian distribution with $0$ mean and $\alpha t^\text{cp}_{i,k}$ or $\alpha h_{i,k}$ standard deviation for $t^\text{cp}_{i,k}$ or $h_{i, k}$, repectively, where $t^\text{cp}_{i,k}$ and $h_{i, k}$ is the true value.
The average number of scheduled devices with estimation errors is almost the same as that without estimation error as shown in Fig. \ref{fig1}(b), (d), indicating that the performance loss is mostly caused by the round latency.
The main reason for the increasing round latency is that the estimation errors can cause the device scheduling algorithm (Alg. \ref{greedy}) to schedule inappropriate devices.
Nevertheless, Fig. \ref{fig1}(a), (c) show that the proposed policy is robust to estimation errors.
}

\begin{figure}[!t]
\setlength{\abovecaptionskip}{2pt}
\setlength{\belowcaptionskip}{2pt}
  \centering
  \subfloat[]
  {\includegraphics[width=0.30\linewidth]{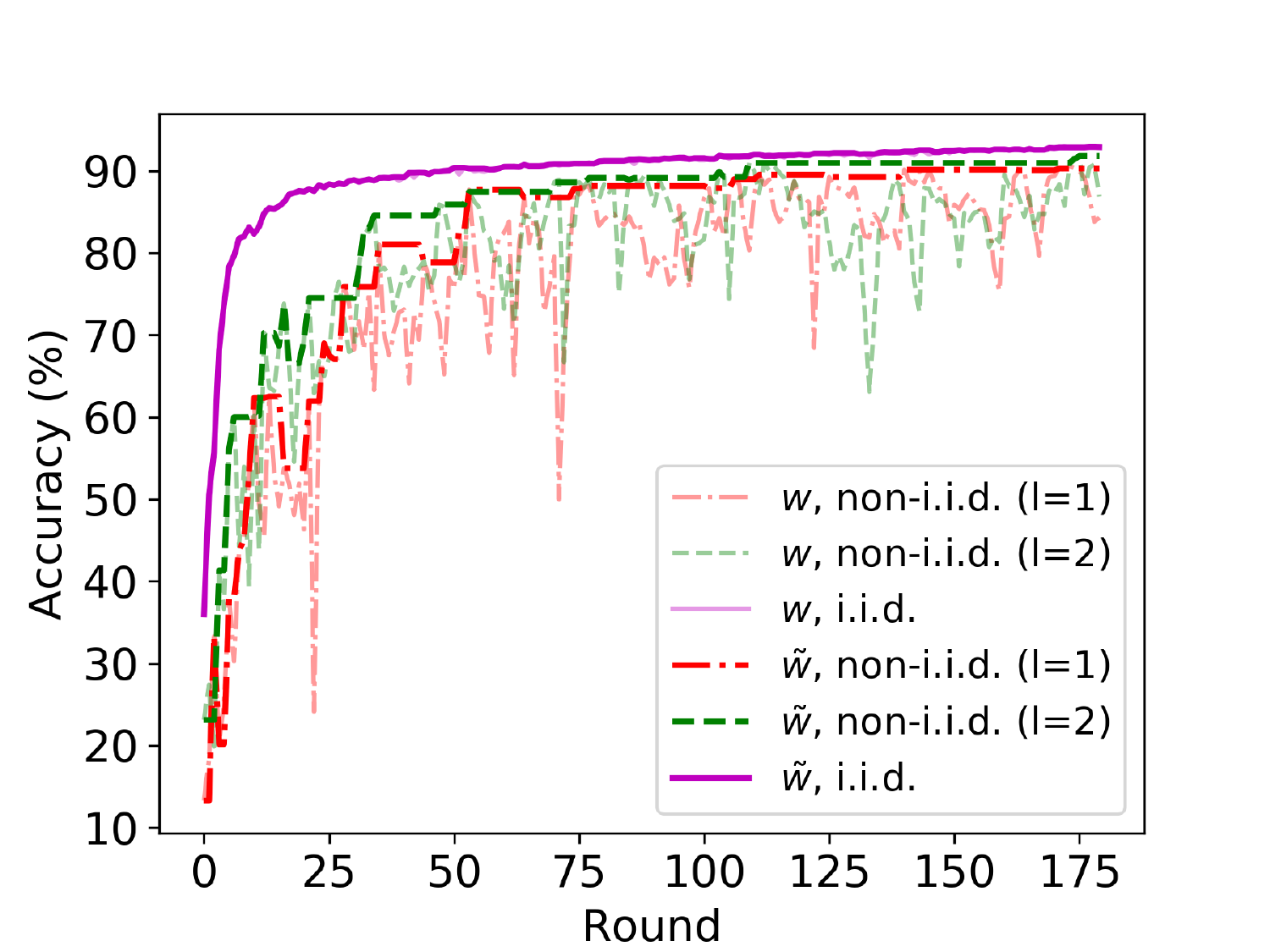}
  \label{round-loss}}
  \hfil
  \subfloat[]
  {\includegraphics[width=0.30\linewidth]{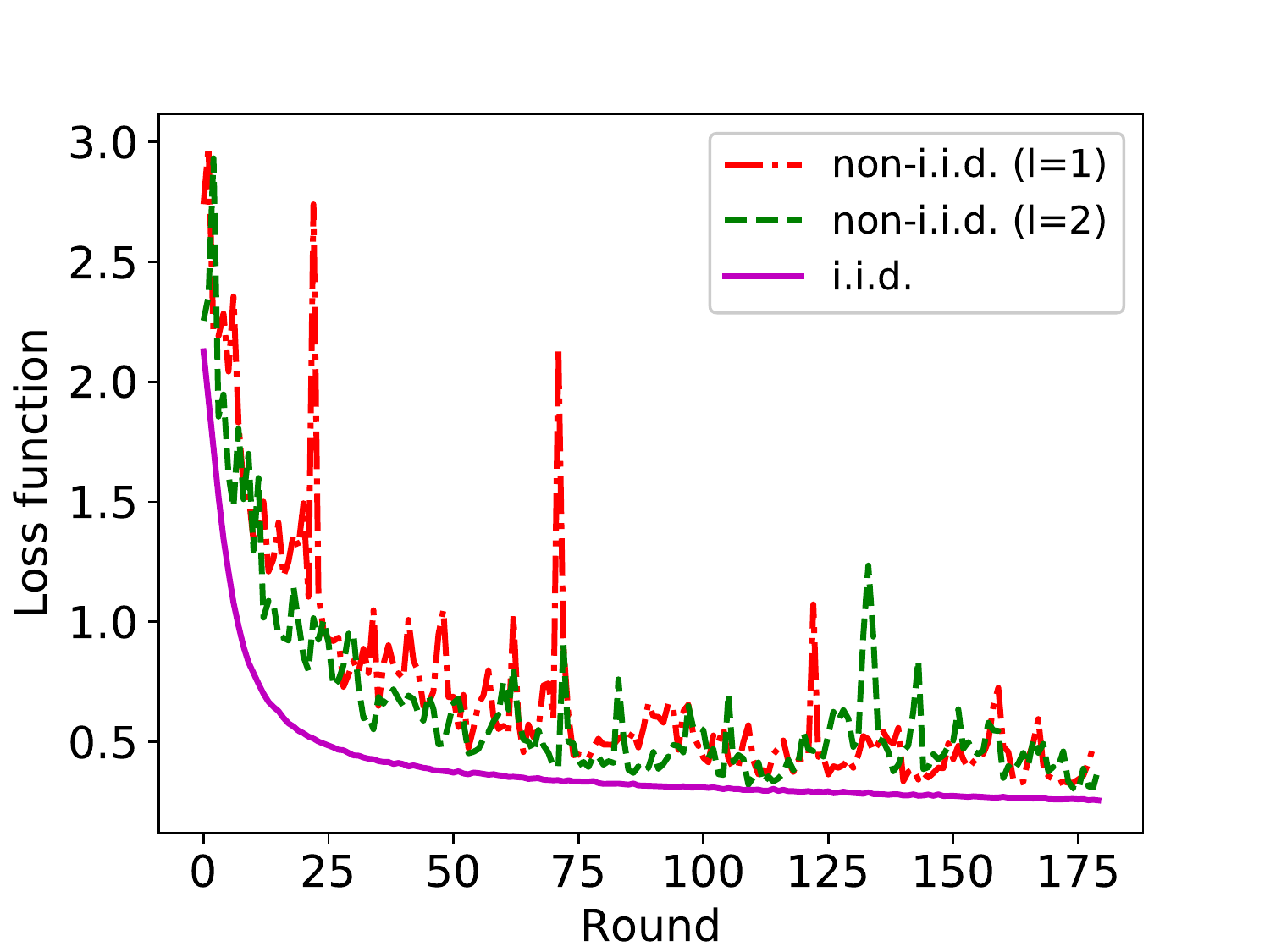}
  \label{round-acc}}
  \hfil
  \subfloat[]
  {\includegraphics[width=0.30\linewidth]{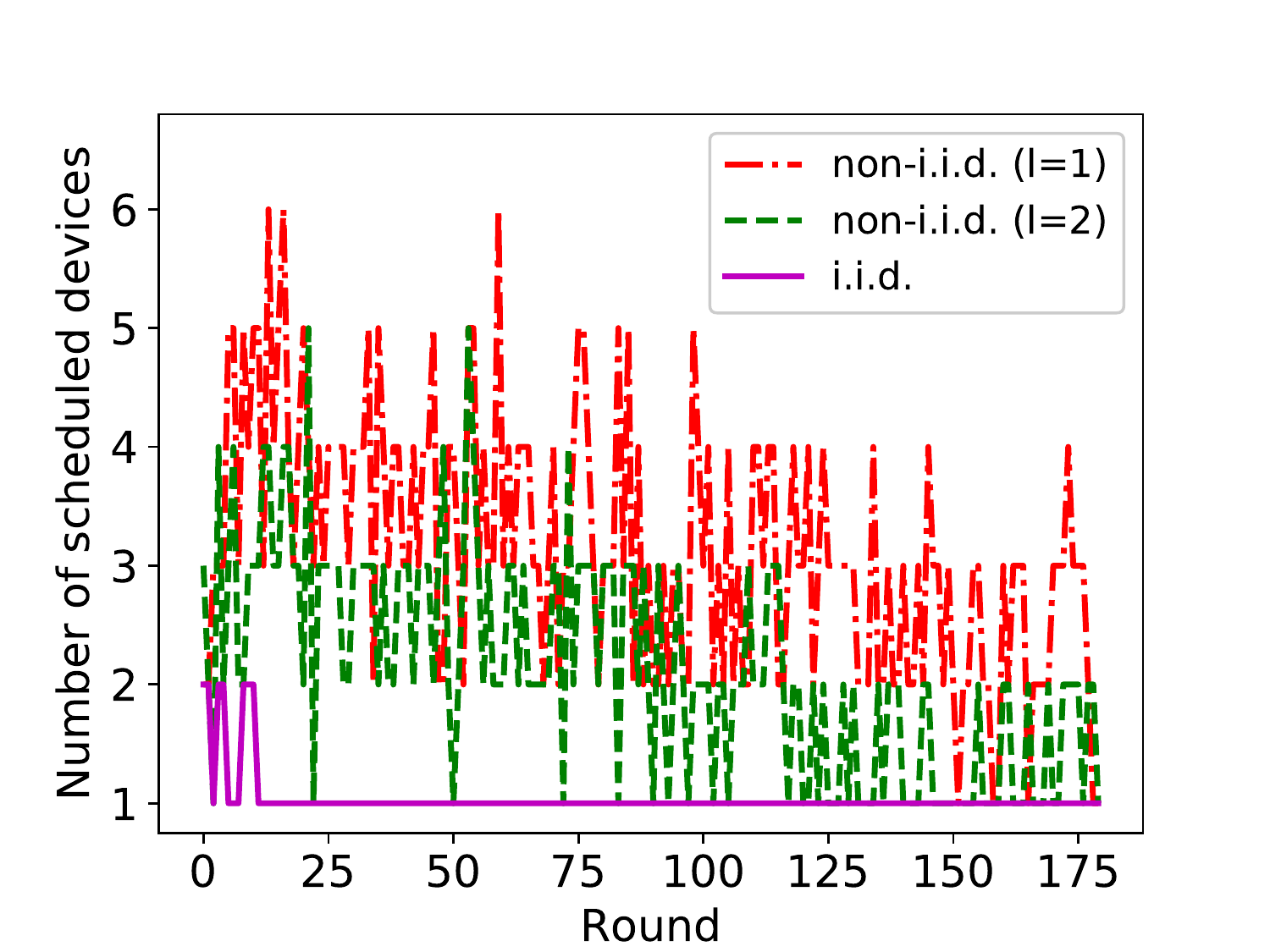}
  \label{round-scheduleddevices}}

  \addvspace{-12pt}
  \subfloat[]
  {\includegraphics[width=0.30\linewidth]{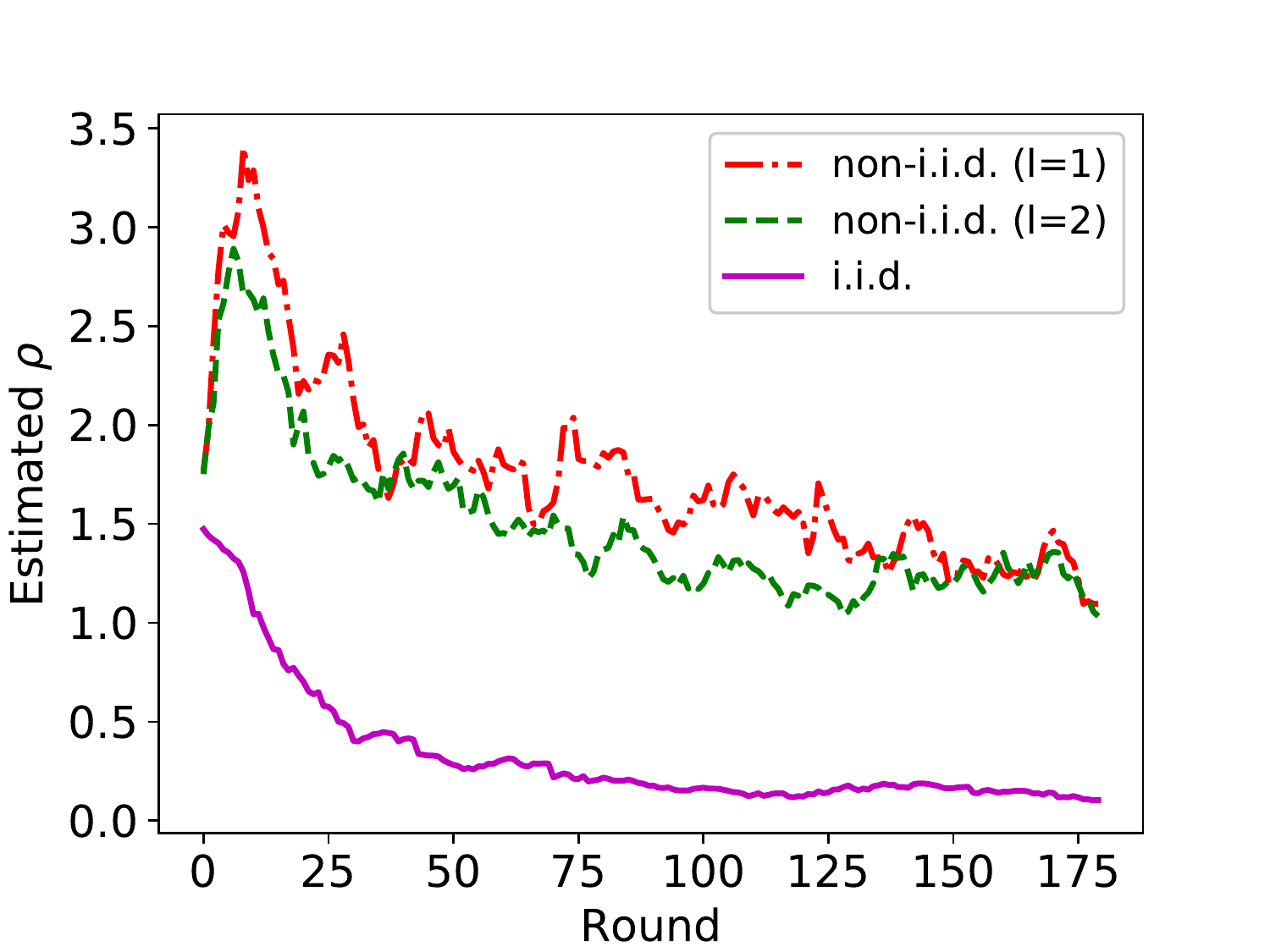}
  \label{round-rho}}
  \hfil
  \subfloat[]
  {\includegraphics[width=0.30\linewidth]{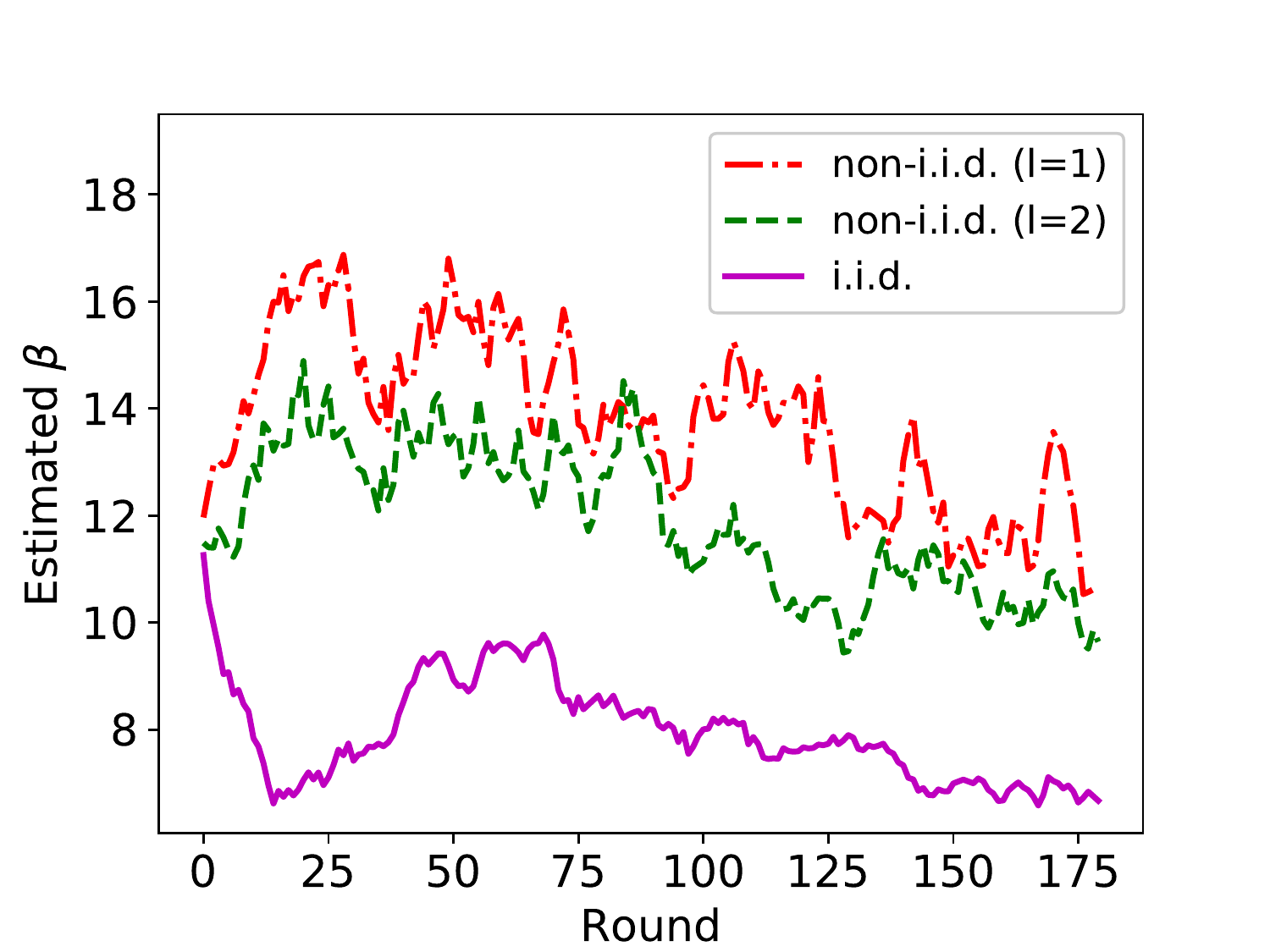}
  \label{5}}
  \hfil
  \subfloat[]
  {\includegraphics[width=0.30\linewidth]{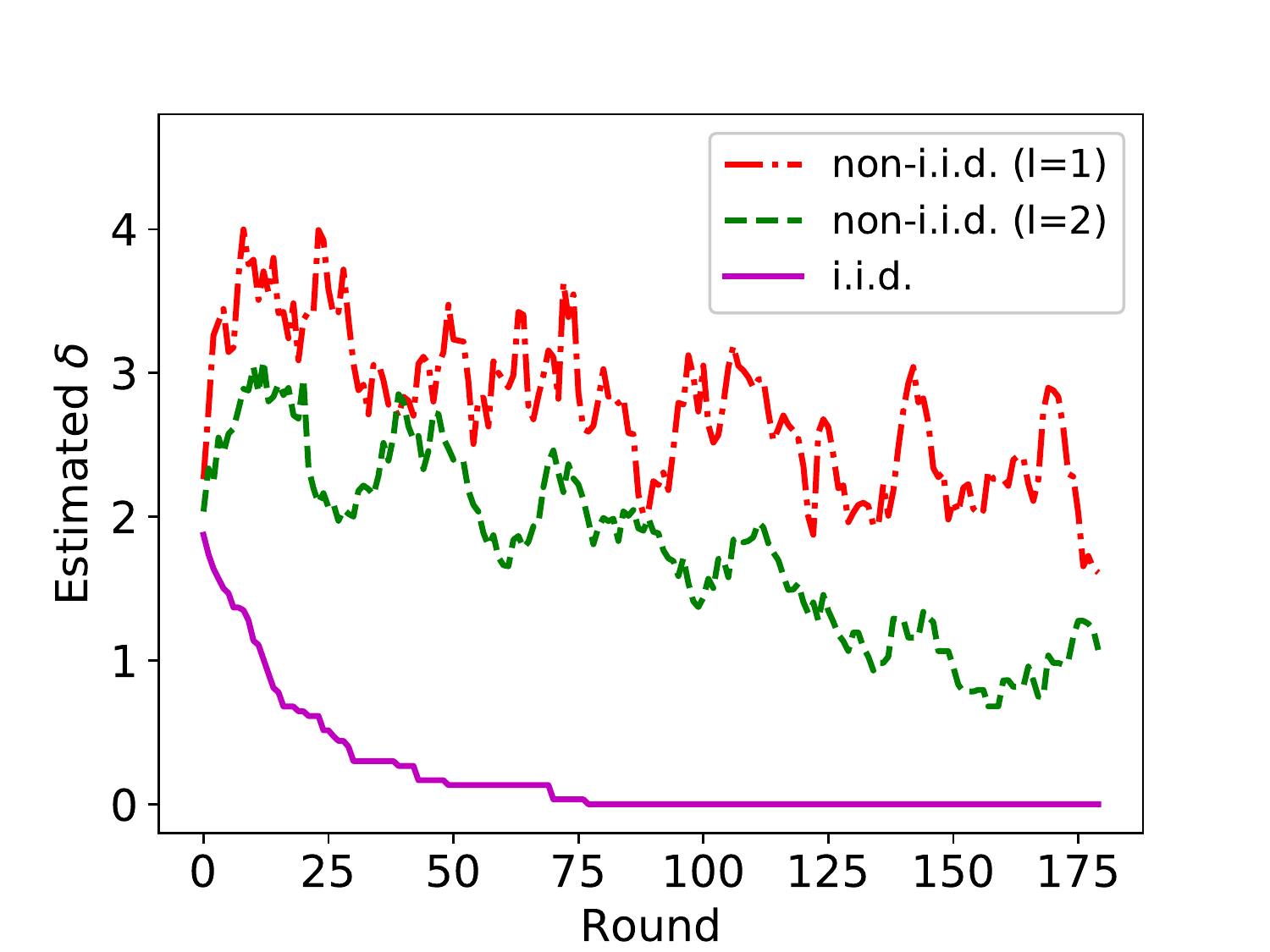}
  \label{6}}
\caption{{Instantaneous results of the proposed fast convergence policy, denoted by FC, on MNIST.} (a), (b), (c), (d), and (e) show the test accuracy, global loss, number of scheduled devices, estimated $\rho$, estimated $\beta$, and estimated $\delta$, respectively.}
\label{fig2}
\vspace{-10pt}
\end{figure}

{Then the instantaneous results of FC on MNIST are shown in Fig. \ref{fig2}.}
Fig. \ref{fig2}(a) shows the model accuracy of $\vec{w}_k^\Pi$ and $\tilde{\vec{w}}$ on the testing dataset v.s. the number of rounds.
Note that the accuracy of $\tilde{\vec{w}}$ is mostly higher than that of $\vec{w}_k^\Pi$ under the same data distribution, which is consistent with the definition of $\tilde{\vec{w}}$.
We also notice that the number of scheduled devices increases with the non-i.i.d. level due to the higher values of the estimated $\rho$, $\beta$, and $\delta$.
For $l=1$ and $l=2$ datasets, since the local datasets of different devices are non-i.i.d., differences between the local updated models are greater than that of the i.i.d. dataset, and thus the value of $\delta$ that characterizes the differences between model updates is higher.
Similar results can be observed for $\beta$ and $\rho$, indicating that the loss function is less smooth and convex for the non-i.i.d. datasets.
Further, since $\rho$, $\beta$, and $\delta$ tend to decrease during the training as shown by Fig. \ref{fig2}(d), (e), and (f), respectively, FC schedules more devices in the beginning of the training process, which helps FL to converge faster \cite{sun2019energy}.

\subsection{Comparison of Different Scheduling Policies}

To show the effectiveness of the convergence analysis, we compare FC with a set of baseline policies that schedule fixed numbers of devices (i.e., remove steps 8-9 in Alg. \ref{greedy}, and stop scheduling new devices until reaching the fixed number).
\begin{figure}[!t]
\setlength{\abovecaptionskip}{2pt}
\setlength{\belowcaptionskip}{2pt}
\centering
\subfloat[]
{\includegraphics[width=0.36\linewidth]{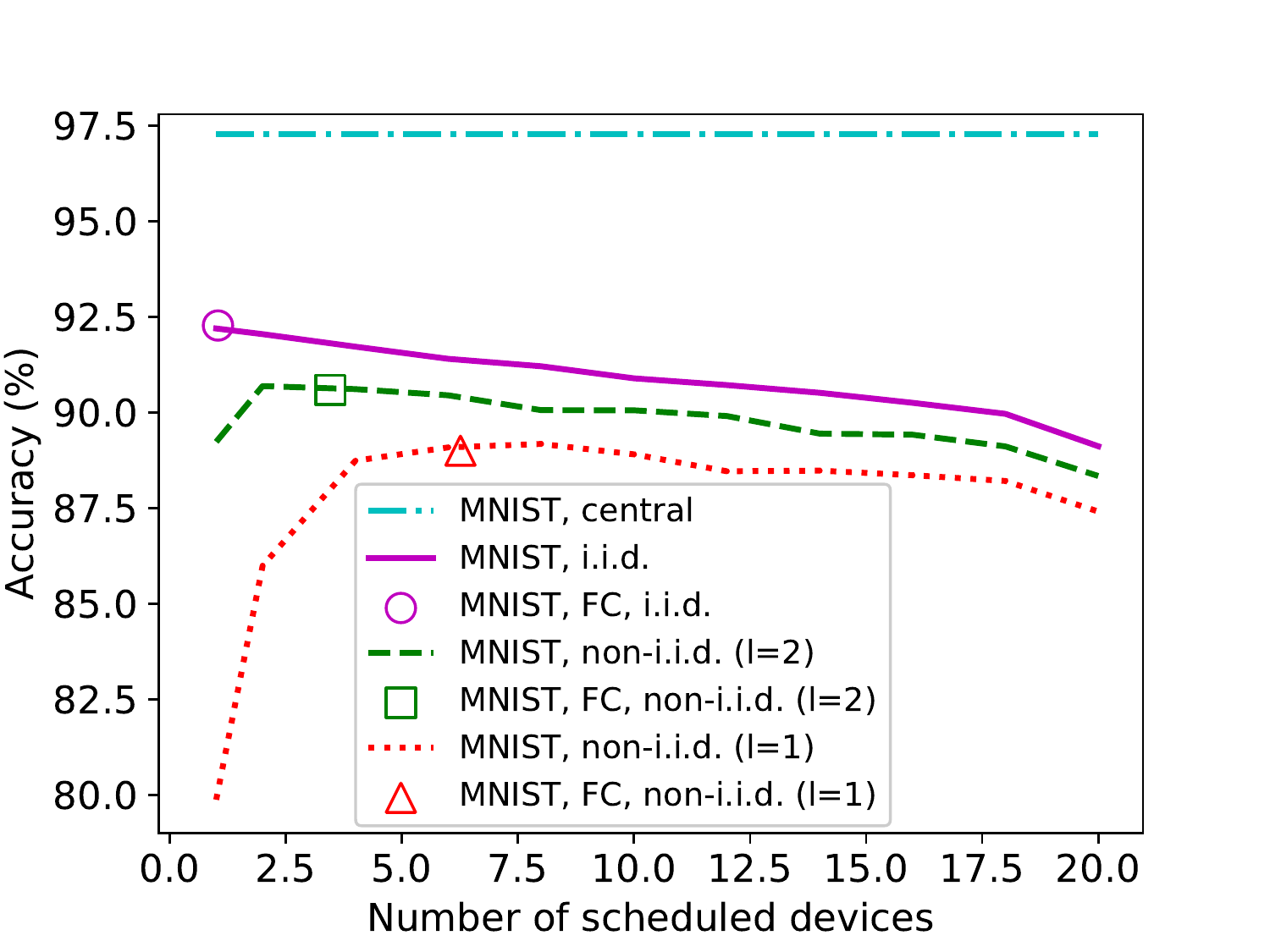}
\label{mnist-frac-acc}}
\hfil
\subfloat[]
{\includegraphics[width=0.36\linewidth]{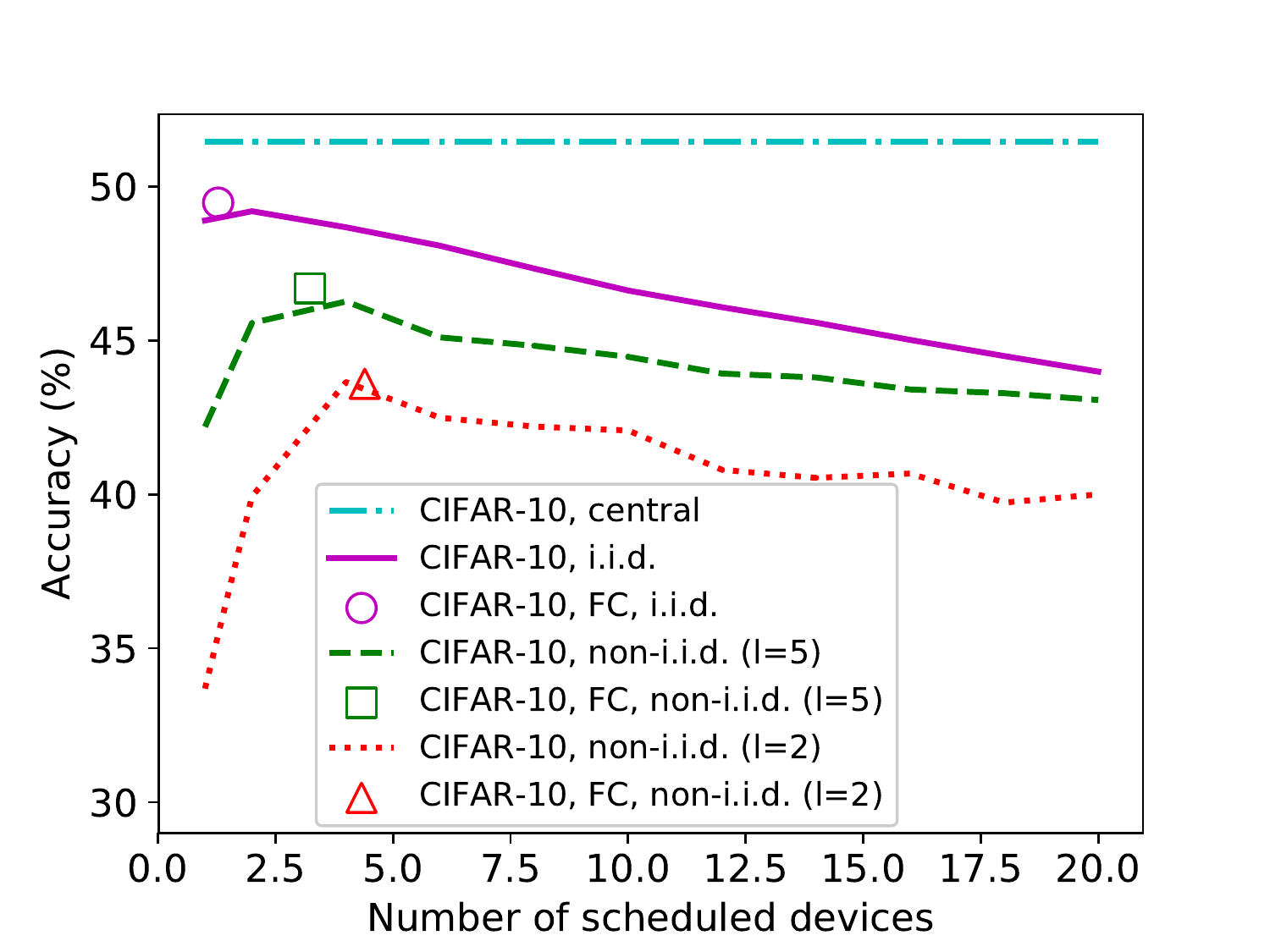}
\label{cifar-frac-acc}}
\caption{{The highest achievable accuracy within the training time budget $T$ v.s. the number of scheduled devices. $T$ is set to be $60$ seconds and $200$ seconds for MNIST and CIFAR-10, respectively. The curves show the results of centralized training and the baseline policies that schedule fixed numbers of devices, and the markers represent FC.} Results are averaged over 5 independent trails.}
\label{frac-acc}
\vspace{-10pt}
\end{figure}
{Fig. \ref{frac-acc} shows the highest achievable accuracy of FC, the baseline policies, and centralized training on MNIST and CIFAR-10, respectively.
The result of centralized training can be treated as the upper bound of performance, where the central trainer is assumed to have 20 times stronger computation capability compared to the devices and all training data have been aggregated to the central trainer.}
We notice that for the baseline policies, scheduling either too few or too many devices degrades the model accuracy for {$l=1$ and $l=2$ on MNIST and all three cases on CIFAR-10}.
The reason is the trade-off between the latency per round and the number of the rounds, that is: scheduling more devices can potentially reduce the number of required rounds to attain a fixed accuracy but with larger latency per round, while scheduling fewer devices can reduce the latency per round but with slower convergence rate w.r.t. the number of rounds.
For FC, since the number of scheduled devices can be optimized, there is only one point for each dataset in the figure which actually corresponds to the average number of scheduled devices.
As shown in Fig. \ref{frac-acc}, FC performs close to the optimal points for all data distributions, because the proposed FC achieves a good trade-off between the latency per round and the number of rounds.
Moreover, the optimal number of scheduled devices increases with the non-i.i.d. level, indicating that a fixed scheduling policy cannot adapt to all different distributions of the datasets.

\begin{figure*}[!t]
\setlength{\abovecaptionskip}{2pt}
\setlength{\belowcaptionskip}{2pt}
  \centering
  \subfloat[]{\includegraphics[width=0.32\linewidth]{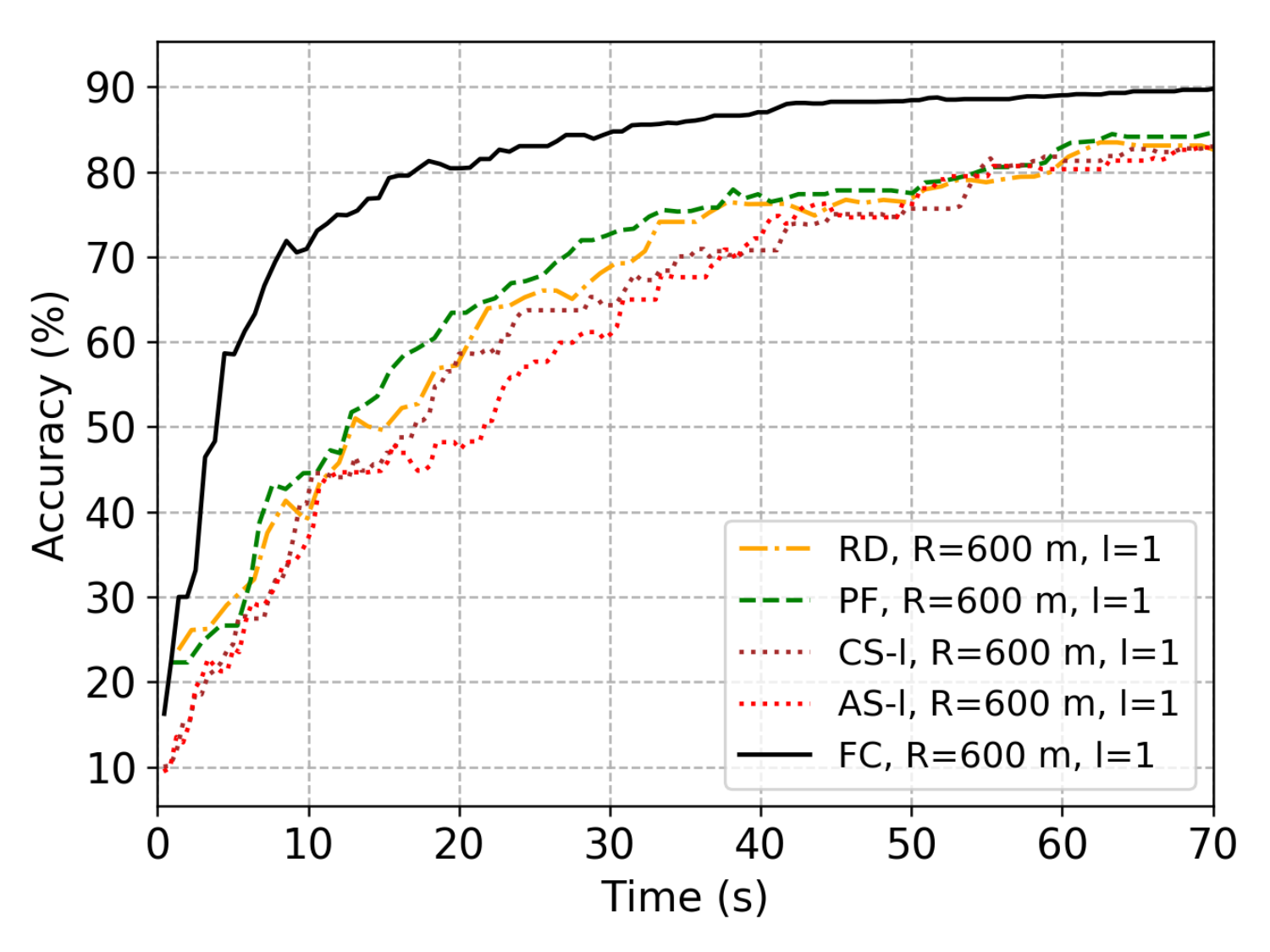}
  \label{time-acc_r200}}
  \hfil
  \subfloat[]{\includegraphics[width=0.32\linewidth]{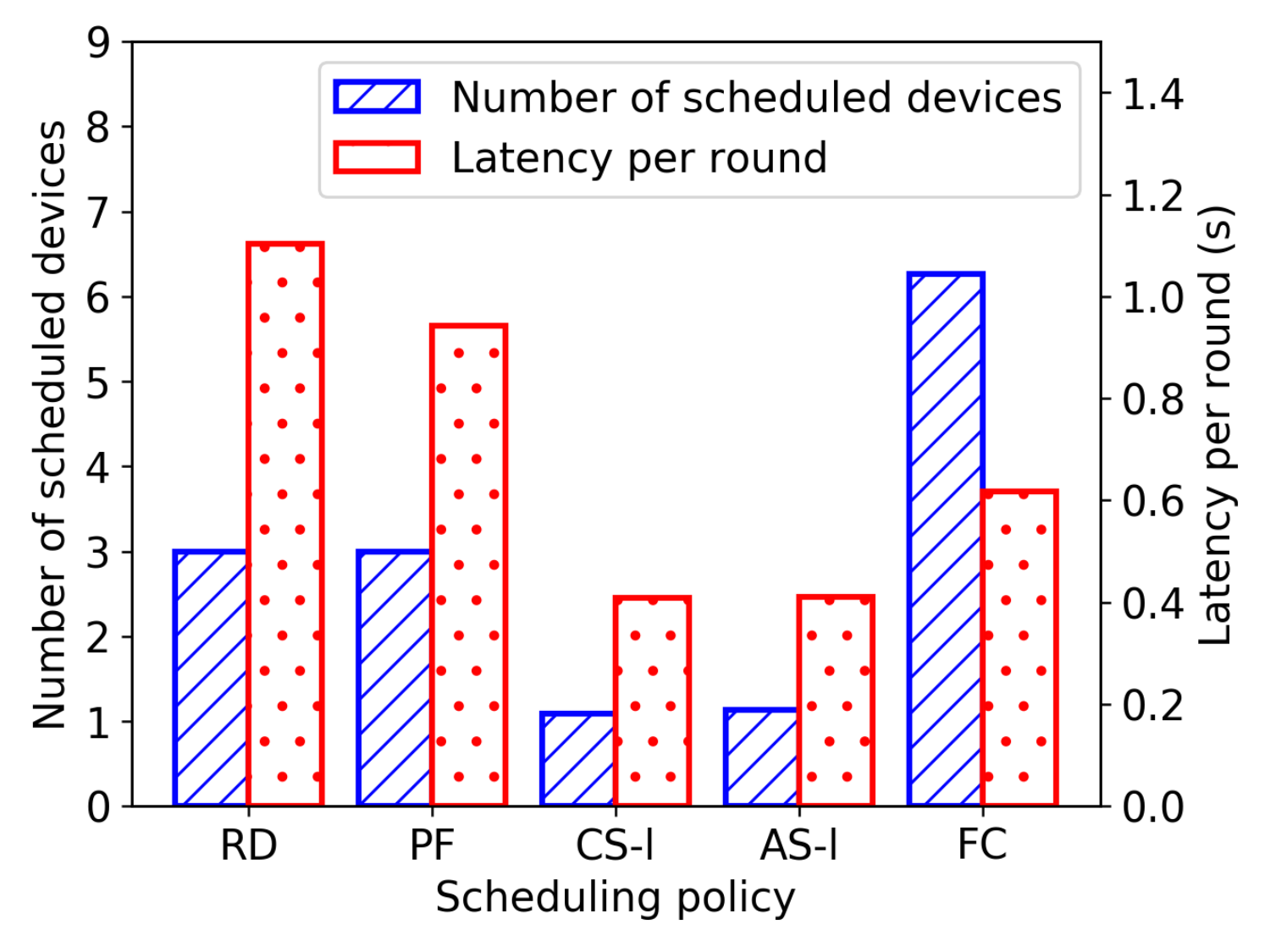}
  \label{avgtime+scheduleddevices_r200}}

  \addvspace{-5pt}
  \subfloat[]{\includegraphics[width=0.32\linewidth]{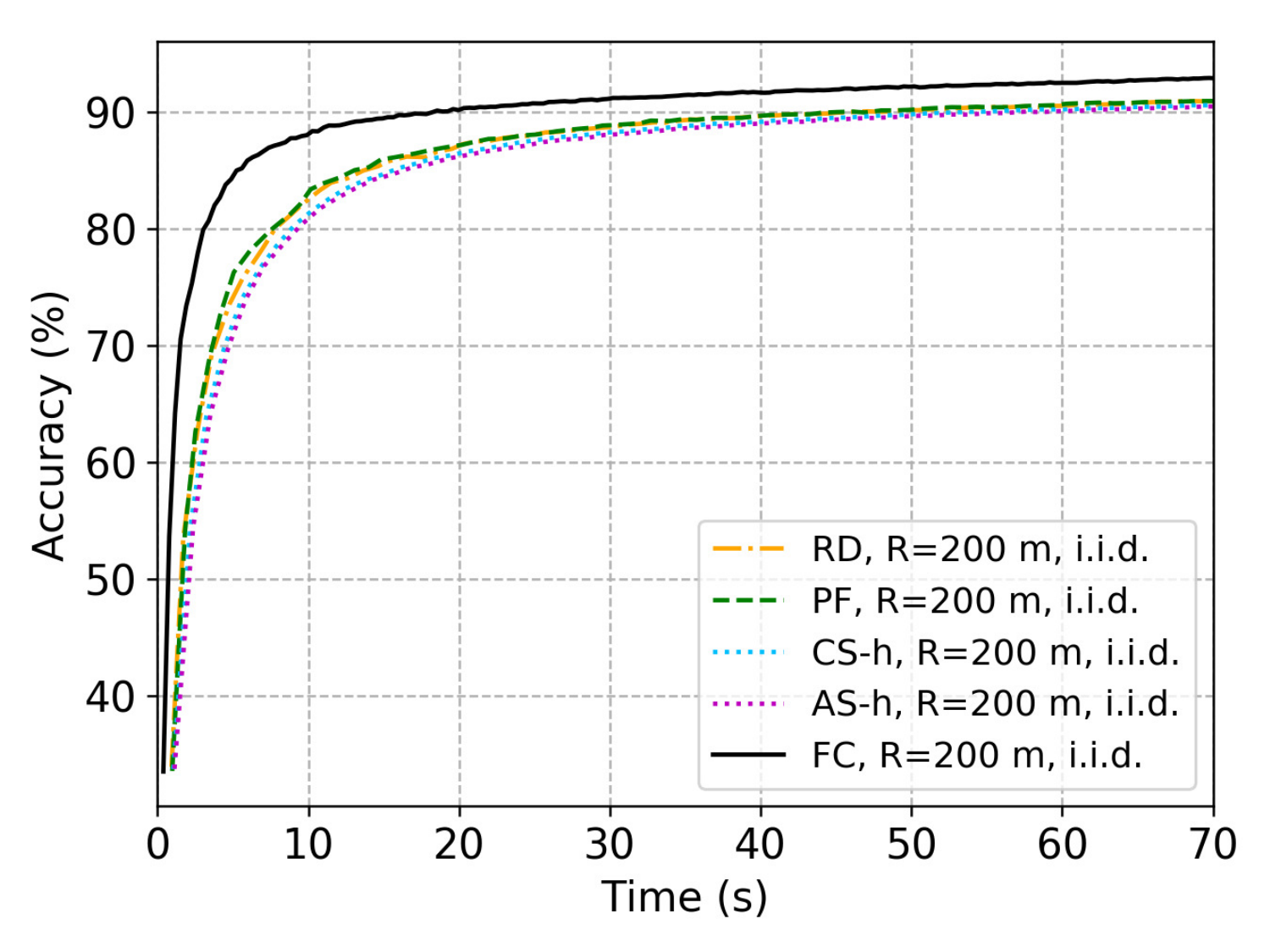}
   \label{time-acc_r600}}
  \hfil
  \subfloat[]{\includegraphics[width=0.32\linewidth]{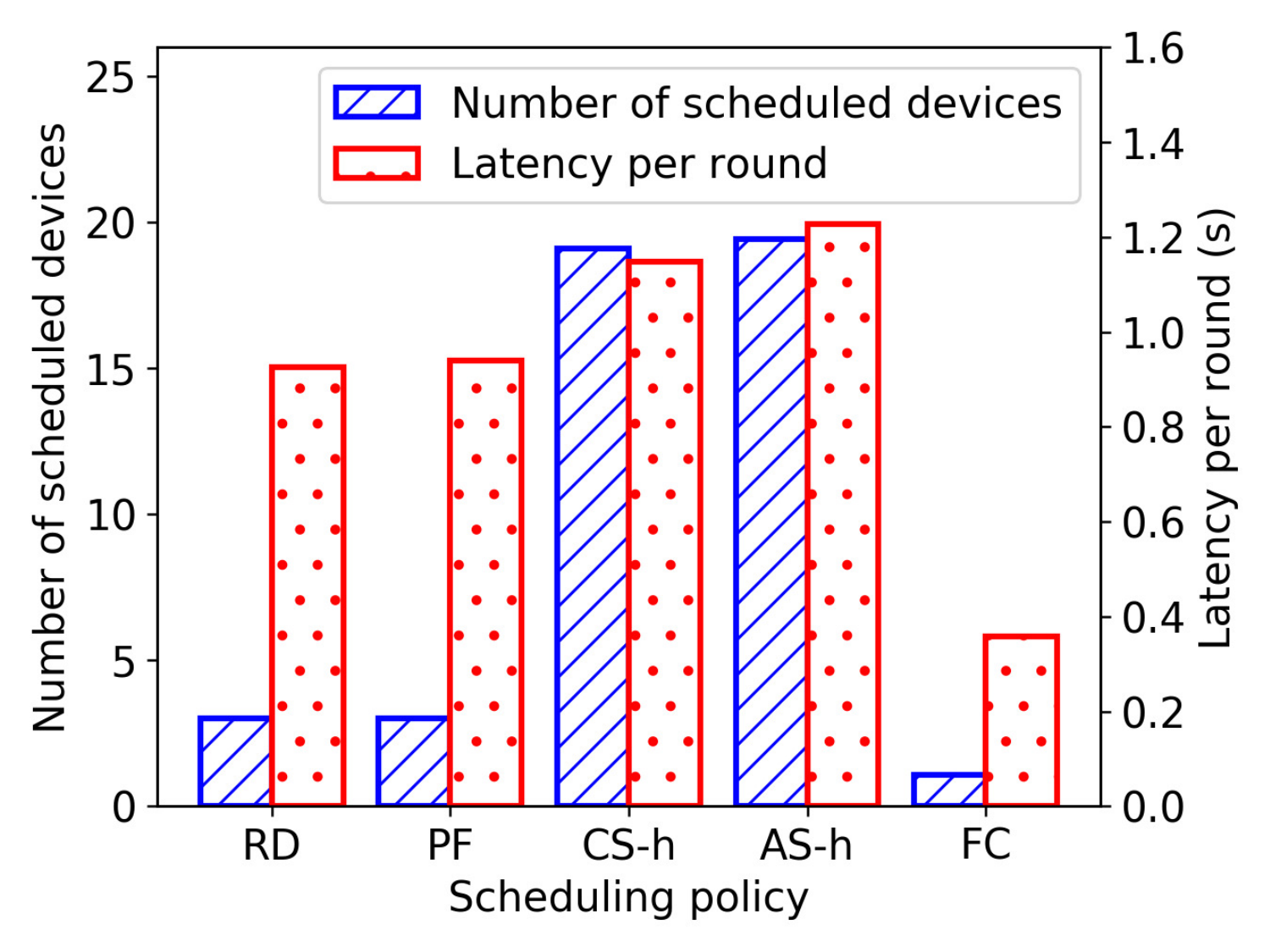}
  \label{avgtime+scheduleddevices_r600}}
  \caption{{The FL convergence performance under different scheduling policies on MNIST.} Results are averaged over 5 independent trails.
  (a) The test accuracy v.s. time with $R=600$ m and $l=1$ dataset.
  (b) The average number of scheduled devices and the corresponding average latency per round w.r.t. different scheduling policies with $R=600$ m and $l=1$ dataset.
  (c) The test accuracy v.s. time with $R=200$ m and i.i.d. dataset.
  (d) The average number of scheduled devices and the corresponding average latency per round w.r.t. different scheduling policies with $R=200$ m and i.i.d. dataset.
  }
  \label{fig3}
\vspace{-10pt}
\end{figure*}

Furthermore, we compare FC with 4 other baseline policies.
The first baseline is the random scheduling policy (denoted by RD) with the empirically optimal number of scheduled devices for {$l=2$ on MNIST}, which is $N_\text{RD}=3$.
The second one is the proportional fair policy (denoted by PF) proposed in \cite{yang2019scheduling} that schedules $N_\text{PF}$ devices with the best instantaneous channel conditions out of all $M$ devices, where we set $N_\text{PF}=N_\text{RD}$ in the experiments.
The third one is the client selection policy proposed in \cite{nishio2019client}, which iteratively schedules the device that consumes the least time in local model updating and uploading, until reaching a preset time threshold $T_{\text{h},\text{CS}}$, and all scheduled devices are allocated equal bandwidth.
Here we use two different thresholds $T_{\text{h},\text{CS}}^\text{low}=0.4$ second and $T_{\text{h},\text{CS}}^\text{high}=1.5$ second, namely CS-l and CS-h, to adapt to various data distributions and cell radius.
The last one is the joint scheduling and resource allocation policy proposed in \cite{chen2019joint} that optimizes the asymptotic convergence performance by scheduling as many devices as possible within a given time threshold $T_{\text{h},\text{AS}}$. In the experiments, we set $T_{\text{h},\text{AS}}^\text{low}=0.4$ second and $T_{\text{h},\text{AS}}^\text{high}=1.5$ second for AS-l and AS-h, respectively.

{The convergence performances w.r.t. time on MNIST under different scheduling policies are reported in Fig. \ref{fig3}(a) and (c) for $R=600$ with $l=1$ dataset and $R=200$ with i.i.d. dataset, respectively.}
Fig. \ref{fig3}(b) and (d) show the corresponding average number of scheduled devices and average latency per round.
For $R=600$ m with $l=1$ dataset, we notice that FC reaches 80\% test accuracy after 17.35 seconds of training, while PF needs 54.71 seconds to attain the same accuracy and other policies are even slower.
Also note that under the given training time budget $T=60$ seconds, the highest achievable accuracy is 89.0\% under FC, which is 9.0\%, 6.4\%, 9.2\%, and 8.1\% higher than RD, PF, CL-l, and AS-l, respectively.
For $R=200$ m with i.i.d. dataset, FC attains 92.6\% test accuracy within the training time budget, exceeds RD, PF, CL-h, and AS-h over 2.1\%, 2.0\%, 2.4\%, and 2.5\%, respectively.
The advantage of FC is twofold: Firstly, FC schedules the devices with better channel conditions and computation capabilities according to Alg. \ref{greedy}, and thus can reduce the per round latency compared to RD and PF.
For example, for $R=600$ m with $l=1$ dataset, FC is able to schedule on average 6.26 devices within 0.62 second per round while PF needs 0.94 second for only 3 devices as shown in Fig. \ref{fig3}(b).
Secondly, FC achieves a better trade-off between the latency per round and the number of required rounds.
Since the time thresholds for CS-l, CS-h, AS-l, and AS-h are fixed, they can hardly adapt to various data distributions and cell radius.
As shown in Fig. \ref{fig3}(a) and (b), CS-l and AS-l schedule too few devices due to the low time threshold, and thus converge slower than FC for $R=600$ m with $l=1$ dataset.
While Fig. \ref{fig3}(c) and (d) show that CS-h and AS-h converge slower than FC for $R=200$ m with i.i.d. dataset because of scheduling too many devices.

\begin{table}
\setlength{\abovecaptionskip}{2pt}
\setlength{\belowcaptionskip}{2pt}
\caption{Highest Achievable Accuracy for All Policies Under Various Data Distributions and Cell Radius.}
\label{tab1}
\begin{center}
\begin{tabular}{c c c c c c c}
\hline
\multirow{2}*{Policy} &  \multicolumn{3}{c}{MNIST} & \multicolumn{3}{c}{CIFAR-10} \\
\cline{2-7}
\multicolumn{1}{c}{} & $R=200$ m & $R=600$ m & $R=1000$ m & $R=200$ m & $R=600$ m & $R=1000$ m\\
\hline
FC (proposed)    & {\bf{89.9}}/91.2/{\bf{92.6}} & {\bf{89.0}}/{\bf{90.6}}/{\bf{92.3}} & 87.0/{\bf{88.6}}/{\bf{92.1}} & {\bf{44.5}}/{\bf{48.2}}/{\bf{50.2}} & {\bf{43.6}}/{\bf{46.7}}/{\bf{49.1}} & {\bf{42.5}}/44.9/{\bf{48.4}}\\
\hline
RD    & 85.4/88.0/90.5 & 80.0/86.0/90.1 & 65.2/78.5/87.8 & 42.0/44.7/47.5 & 39.9/41.3/45.3 & 32.0/33.6/38.0 \\
\hline
PF    & 84.3/88.1/90.6 & 82.6/87.6/90.6 & 81.8/87.6/90.5 & 42.1/45.0/48.0 & 42.1/44.3/47.4 & 41.2/43.5/46.2 \\
\hline
CS-l  & 88.8/91.3/92.5 & 79.8/88.2/92.1 & 75.6/86.3/91.9 & 35.4/43.6/49.6 & 34.9/43.0/48.9 & 32.3/41.4/48.2 \\
\hline
CS-h  & 88.6/89.5/90.2  & 88.7/89.3/90.3 & {\bf{87.1}}/88.4/90.1 & 43.5/46.2/47.9 & 43.3/44.8/47.7 & 41.9/{\bf{45.1}}/47.2\\
\hline
AS-l  & 89.2/{\bf{91.4}}/92.5 & 80.9/88.7/{\bf{92.3}} & 76.1/87.4/92.0 & 37.1/43.9/49.4 & 34.2/43.2/48.8 & 31.3/41.7/47.8 \\
\hline
AS-h  & 88.5/89.3/90.1 & 88.2/89.1/89.9 & 86.9/87.9/89.8 & 42.5/45.6/47.1 & 42.9/44.6/47.0 & 42.0/44.3/46.8 \\
\hline
\end{tabular}
\end{center}
\vspace{-10pt}
\end{table}

{Table \ref{tab1} summarizes the highest achievable accuracy on MNIST and CIFAR-10 under different data distributions and cell radius.
Each item gives the results of $l=1$, $l=2$, and i.i.d. datasets on MNIST and $l=2$, $l=5$, and i.i.d. datasets on CIFAR-10, respectively.}
It is shown that FC adapts to different system settings {and datasets, achieving} the highest accuracy under most settings.
Although CS-l, CS-h, AS-l, and AS-h have similar or even higher (but no more than 0.2\%) accuracy compared to FC under some settings {(e.g., $R=200$ m with $l=2$ on MNIST for CS-l and AS-l, $R=1000$ m with $l=1$ on MNIST for CS-h and AS-h, and $R=1000$ m with $l=5$ on CIFAR-10 for CS-h)}, the accuracy degrades notably under other settings, indicating that CS-l, CS-h, AS-l, and AS-h are not flexible or robust.
Since the optimal number of scheduled devices and the per round latency vary under different system settings, choosing the optimal $T_\text{h}$ for CS and AS accordingly is neither efficient nor practical for wireless FL.

\section{Conclusion}
In this paper, we have studied a joint bandwidth allocation and scheduling problem to optimize the convergence rate of FL w.r.t. time.
We derive a convergence bound of FL to characterize the impact of device scheduling, based on which a joint bandwidth allocation and scheduling policy has been proposed.
The proposed FC policy achieves a desirable trade-off between the latency per round and the number of required rounds, in order to minimize the global loss function of the obtained model parameter under a given training time budget.
The experiments reveal that the optimal number of scheduled devices increases with the non-i.i.d. level of local datasets.
In addition, the proposed FC policy can schedule near-optimal number of devices according to the learned loss function characteristics, gradient characteristics and system dynamics, and outperforms state-of-the-art baseline scheduling policies under different data distributions and cell radius.
{In the future, FL systems with heterogenous device computation capabilities and resource constraints can be considered, where joint optimization of the batch size, the number of local updates, and the device scheduling policy can be studied.}


%

\appendices
\section{Proof of Theorem \ref{thm3}}\label{appendix3}
For $\gamma > 0$, since
\begin{align}
    \frac{\mathrm{d}}{\mathrm{d}\gamma}\left(\gamma B\text{log}_2\left(1+\frac{Ph^2}{\gamma B N_0}\right)\right)
    &= B\text{log}_2\left(1+\frac{Ph^2}{\gamma B N_0}\right) -
    \frac{B P h^2}{(\gamma B N_0 + P h^2)\text{ln}2} \nonumber \\
    & > \frac{B}{\text{ln}2}\left(\frac{\frac{Ph^2}{\gamma B N_0}}{1+\frac{Ph^2}{\gamma B N_0}}\right) - \frac{B P h^2}{(\gamma B N_0 + P h^2)\text{ln}2} \nonumber \\
    & = 0,
\end{align}
where the inequality is because $\text{ln}(1+x) > \frac{x}{1+x}$, for $x > 0$.
Therefore, $\gamma_{i,k}B\text{log}_2\left(1+\frac{P_ih_{i,k}^2}{\gamma_{i,k}BN_0}\right)$ monotonically increases with $\gamma_{i,k}$.
While it is obvious that $\gamma_{i,k}B\text{log}_2\left(1+\frac{P_ih_{i,k}^2}{\gamma_{i,k}BN_0}\right) > 0$ for a non-trivial bandwidth allocation (i.e., $\gamma_{i,k}>0$),
and thus $t_{i,k}^\text{cp}+\frac{S}{\gamma_{i,k}B\text{log}_2\left(1+\frac{P_ih_{i,k}^2}{\gamma_{i,k}BN_0}\right)}$ monotonically decreases with $\gamma_{i,k}$.
{If any device has finished the whole local model update process earlier than other devices,
we can reallocate some bandwidth from that device to other slower devices.
As a result, the round latency, which is determined by the slowest device, can be reduced.
The reallocation of bandwidth can be performed until all devices finish local updating at the same time.
Therefore, the optimal solution of \ref{P2} can be achieved if and only if all bandwidth is allocated and all scheduled devices have the same finishing time.}
As a result, the optimal solution and corresponding objective value is given by the following equations
\begin{equation}
\left\{
\begin{array}{l}
t_{i,k}^\text{cp}+\frac{S}{\gamma^*_{i,k}B\text{log}_2\left(1+\frac{P_ih_{i,k}^2}{\gamma^*_{i,k}BN_0}\right)} = t^*_k(\Pi_k) , \forall i \in \Pi_k, \\
\sum_{i=1}^{M} \gamma^*_{i,k} = 1 ,   \\
0 \leq \gamma^*_{i,k} \leq 1,   \forall i \in[M].
\end{array}
\right.
\label{eq1}
\end{equation}
Solving \eqref{eq1} directly leads to Theorem \ref{thm3}.

\section{Proof of Theorem \ref{thm1}} \label{appendix1}
Based on Assumption \ref{assumption1}, the definition of $F(\vec{w})$, and triangle inequality, we immediately have the following lemma.
\begin{lemma}
If Assumption \ref{assumption1} holds, then $F(\vec{w})$ is convex, $\rho$-Lipschitz, and $\beta$-smooth.
\label{lm}
\end{lemma}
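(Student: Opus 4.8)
The plan is to exploit that $F(\vec{w})=\frac{1}{D}\sum_{i\in\mathcal{M}} D_i F_i(\vec{w})$ is a convex combination of the local loss functions, since the weights $D_i/D$ are nonnegative and satisfy $\sum_{i\in\mathcal{M}} D_i/D = 1$. Each of the three claimed properties of $F$ should then follow from the corresponding property of the $F_i$ granted by Assumption \ref{assumption1}, via a single application of the triangle inequality.

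First I would establish convexity: a nonnegative weighted sum of convex functions is convex, so convexity of every $F_i$ transfers directly to $F$. Next, for the $\rho$-Lipschitz property, I would write $F(\vec{w})-F(\vec{w}') = \frac{1}{D}\sum_{i} D_i\bigl(F_i(\vec{w})-F_i(\vec{w}')\bigr)$, apply the triangle inequality, bound each summand by $\rho\norm{\vec{w}-\vec{w}'}$ using Assumption \ref{assumption1}, and collapse the weights using $\sum_{i} D_i = D$ to recover the factor $1$. The same template handles $\beta$-smoothness: by linearity of differentiation $\nabla F(\vec{w}) = \frac{1}{D}\sum_{i} D_i \nabla F_i(\vec{w})$, and the triangle inequality together with the per-device smoothness bound yields $\norm{\nabla F(\vec{w})-\nabla F(\vec{w}')}\leq \beta\norm{\vec{w}-\vec{w}'}$.

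This lemma is entirely routine, so there is no substantive obstacle; the only point requiring mild care is that the $\beta$-smoothness argument implicitly uses differentiability of each $F_i$, which is guaranteed by the smoothness assumption, in order to pass the gradient through the finite sum. Because all three arguments share the same convex-combination-plus-triangle-inequality structure, I would present them compactly together rather than as three separate derivations.
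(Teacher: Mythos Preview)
Your proposal is correct and follows exactly the approach indicated in the paper, which simply invokes Assumption~\ref{assumption1}, the definition of $F(\vec{w})$ as a convex combination of the $F_i(\vec{w})$, and the triangle inequality. There is nothing to add.
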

Due to that $F(\vec{w})$ is $\beta$-smooth, we have
\begin{equation}
\setlength\abovedisplayskip{3pt}
\setlength\belowdisplayskip{3pt}
    \mathbb{E}\left\{ F(\vec{w}^\Pi_k) - F(\vec{w}_k)\right\} \leq \frac{\beta}{2}\mathbb{E}\left\Vert \vec{w}^\Pi_k-\vec{w}_k\right\Vert^2.
    \label{step16}
\end{equation}
Substituting $\vec{w}^\Pi_k = \frac{\sum_{i\in\Pi} D_i\vec{w}_{i,k}}{\sum_{i\in\Pi}D_i}$ into the right-hand side of \eqref{step16} yields
\begin{align}
\setlength\abovedisplayskip{3pt}
\setlength\belowdisplayskip{3pt}
    \frac{\beta}{2}\mathbb{E}\left\Vert \vec{w}^\Pi_k-\vec{w}_k\right\Vert^2
    & = \frac{\beta}{2}\mathbb{E}\left\Vert \frac{\sum_{i\in\Pi} D_i\vec{w}_{i,k}}{\sum_{i\in\Pi}D_i}-\vec{w}_k\right\Vert^2 \nonumber\\
    & = \frac{\beta}{2}\mathbb{E}\left\Vert \frac{\sum_{i\in\Pi} D_i(\vec{w}_{i,k}-\vec{w}_k)}{\sum_{i\in\Pi}D_i}\right\Vert^2 \nonumber \\
    & = \frac{\beta}{2}\mathbb{E}\left\Vert \frac{\sum_{i=1}^{M}\mathbb{I}\{i\in\Pi\} D_i(\vec{w}_{i,k}-\vec{w}_k)}{\sum_{i\in\Pi}D_i}\right\Vert^2 \nonumber \\
    & \leq \frac{\beta}{2}\mathbb{E}\left\Vert \frac{\sum_{i=1}^{M}\mathbb{I}\{i\in\Pi\} D_i(\vec{w}_{i,k}-\vec{w}_k)}{|\Pi|D_{\rm min}}\right\Vert^2 \nonumber \\
    & = \frac{\beta}{2} \cdot \frac{\left\Vert\sum_{i=1}^{M}\mathbb{P}\{i\in\Pi\} D_i(\vec{w}_{i,k}-\vec{w}_k)\right\Vert^2}{\left(|\Pi|D_{\rm min}\right)^2},
    \label{step17}
\end{align}
where $\mathbb{I}(\cdot)$ is the indicator function, $\mathbb{P}(\cdot)$ is the probability notation, and $D_{\rm min}\triangleq {\rm min}_{i\in\mathcal{M}}D_i$.
Note that $\Pi$ is a stationary random scheduling policy, and thus $\mathbb{P}\{i\in\Pi\}=\frac{|\Pi|}{M}$ and $\mathbb{P}(i, j\in\Pi, i\neq j)
=\frac{|\Pi|(|\Pi|-1)}{M(M-1)}$.
Therefore, we expand the numerator of the second term of \eqref{step17} as follows:
\begin{align}
\setlength\abovedisplayskip{3pt}
\setlength\belowdisplayskip{3pt}
    &\left\Vert\sum_{i=1}^{M}\mathbb{P}\{i\in\Pi\} D_i(\vec{w}_{i,k}-\vec{w}_k)\right\Vert^2 \nonumber \\
    & = \sum_{i=1}^M \mathbb{P}(i\in\Pi)\Vert D_i(\vec{w}_{i,k}-\vec{w}_k)\Vert^2  + \sum_{i\neq j}\mathbb{P}(i, j\in\Pi)D_i D_j(\vec{w}_{i,k}-\vec{w}_k)^\mathsf{T}(\vec{w}_{j,k}-\vec{w}_k) \nonumber \\
    & = \sum_{i=1}^M \frac{|\Pi|}{M}\Vert D_i(\vec{w}_{i,k}-\vec{w}_k)\Vert^2  + \sum_{i\neq j}\frac{|\Pi|(|\Pi|-1)}{M(M-1)}D_i D_j(\vec{w}_{i,k}-\vec{w}_k)^\mathsf{T}(\vec{w}_{j,k}-\vec{w}_k) \nonumber \\
    & \overset{(a)}{=} \sum_{i=1}^M \frac{|\Pi|}{M}\Vert D_i(\vec{w}_{i,k}-\vec{w}_k)\Vert^2 - \sum_{i=1}^M\frac{|\Pi|(|\Pi|-1)}{M(M-1)}\Vert D_i(\vec{w}_{i,k}-\vec{w}_k)\Vert^2 \nonumber \\
    & = \frac{|\Pi|(M-|\Pi|)}{M(M-1)}\sum_{i=1}^M\Vert D_i(\vec{w}_{i,k}-\vec{w}_k)\Vert^2.
    \label{step18}
\end{align}
The equality of (a) is based on the fact that
\begin{align}
\setlength\abovedisplayskip{3pt}
\setlength\belowdisplayskip{3pt}
    \sum_{i\neq j}D_i D_j(\vec{w}_{i,k}-\vec{w}_k)^T(\vec{w}_{j,k}-\vec{w}_k)
    & = \left\Vert \sum_{i=1}^{M} D_i(\vec{w}_{i,k}-\vec{w}_k)\right\Vert^2 -\sum_{i=1}^M \Vert D_i(\vec{w}_{i,k}-\vec{w}_k)\Vert^2 \nonumber \\
    & = \left\Vert \sum_{i=1}^{M} D_i\vec{w}_{i,k}- \sum_{i=1}^{M} D_i \vec{w}_k\right\Vert^2 -\sum_{i=1}^M \Vert D_i(\vec{w}_{i,k}-\vec{w}_k)\Vert^2 \nonumber \\
    & = -\sum_{i=1}^M \Vert D_i(\vec{w}_{i,k}-\vec{w}_k)\Vert^2 .
    \label{step4}
\end{align}
Furthermore, we bound the term $\sum_{i=1}^M \Vert D_i(\vec{w}_{i,k}-\vec{w}_k)\Vert^2$ as follows:
\begin{align}
\setlength\abovedisplayskip{3pt}
\setlength\belowdisplayskip{3pt}
    \sum_{i=1}^M \Vert D_i(\vec{w}_{i,k}-\vec{w}_k)\Vert^2
    & = \sum_{i=1}^M \left\Vert D_i \left(\vec{w}_{i,k}-\frac{\sum_{j=1}^{M} D_j\vec{w}_{j,k}}{D} \right) \right\Vert^2 \nonumber \\
    & = \sum_{i=1}^M \left\Vert D_i \left(\frac{\sum_{j=1}^{M} D_j(\vec{w}_{i,k}-\vec{w}_{j,k}}{D}\right)\right\Vert^2 \nonumber \\
    & \leq \sum_{i=1}^M \sum_{j=1}^M \left(\frac{D_i^2 D_j^2}{D^2}\Vert \vec{w}_{i,k}-\vec{w}_{j,k} \Vert ^2 \right) \nonumber \\
    & \leq \sum_{i=1}^M \sum_{j=1}^M \left( \frac{D_i^2 D_j^2}{D^2} (\Vert \vec{w}_{i,k}-\vec{v}_k \Vert ^2 + \Vert \vec{w}_{j,k}-\vec{v}_k \Vert ^2) \right) \label{step1}.
\end{align}
Based on Lemma 3 in \cite{wang2019adaptive}, we have $\Vert \vec{w}_{i,k}-\vec{v}_k \Vert \leq g_i(\tau)$ and $\Vert \vec{w}_{j,k}-\vec{v}_k \Vert \leq g_j(\tau)$, where $g_i(x)\triangleq\frac{\delta_i}{\beta}\left((\eta\beta+1)^x-1\right)$.
Substituting into \eqref{step1} yields
\begin{equation}
\setlength\abovedisplayskip{3pt}
\setlength\belowdisplayskip{3pt}
    \sum_{i=1}^M \sum_{j=1}^M \left( \frac{D_i^2 D_j^2}{D^2} (\Vert \vec{w}_{i,k}-\vec{v}_k \Vert ^2 + \Vert \vec{w}_{j,k}-\vec{v}_k \Vert ^2) \right)
    \leq  \frac{\sum_{i=1}^M \sum_{j=1}^M \left( D_i^2 D_j^2 \left(g_i^2(\tau)+g_j^2(\tau)\right)\right)}{D^2}. \label{step3}
\end{equation}
Finally, combining \eqref{step16}, \eqref{step17}, \eqref{step18}, \eqref{step1}, and \eqref{step3} together, we have Theorem \ref{thm1}:
\begin{align}
\setlength\abovedisplayskip{3pt}
\setlength\belowdisplayskip{3pt}
    \mathbb{E}\left\{ F(\vec{w}^\Pi_k) - F(\vec{w}_k)\right\}
    & \leq \frac{\beta(M-|\Pi|)\sum_{i=1}^M \sum_{j=1}^M \left( D_i^2 D_j^2 \left(g_i^2(\tau)+g_j^2(\tau)\right)\right)}{2M(M-1)|\Pi|D_{\rm min}^2 D^2}.
\end{align}

\section{Proof of Theorem \ref{thm2}}\label{appendix2}
First, we have the following lemma.
\begin{lemma}
\label{lm1}
If the following conditions hold:
\begin{enumerate}
    \item $\eta \leq \frac{1}{\beta}$
    \item $\eta\varphi - \frac{\rho h(\tau)+ B(\Pi)}{\tau \epsilon^2}>0$
    \item $F(\vec{v}_k)-F(\vec{w}^*)\geq \epsilon$, $k=1,2,\dots,K$
    \item $F(\vec{w}^\Pi_K)-F(\vec{w}^*)\geq \epsilon$
\end{enumerate}
for some $\epsilon>0$, $\varphi \triangleq \omega\left(1-\frac{\beta\eta}{2}\right)$ and $\omega \triangleq {\rm{min}}_k\frac{1}{\norm{\vec{w}^\Pi_k-\vec{w}^*}}$, then the global loss function of wireless FL can be bounded by
\begin{align}
\setlength\abovedisplayskip{3pt}
\setlength\belowdisplayskip{3pt}
    \mathbb{E}\left\{\frac{1}{F(\vec{w}^{\Pi}_K)-F(\vec{w}^*)}\right\} \geq K \left(\eta\varphi\tau - \frac{\rho h(\tau)+B(\Pi)}{\epsilon^2}\right),
\end{align}
where the expectation is taken over the randomness over $\Pi$.
\end{lemma}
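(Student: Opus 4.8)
The plan is to adapt the reciprocal-telescoping argument of \cite{wang2019adaptive} so that it simultaneously absorbs the scheduling-induced divergence $B(\Pi)$ supplied by Theorem \ref{thm1}. I introduce the shorthand $u_k \triangleq \frac{1}{F(\vec{w}^\Pi_k)-F(\vec{w}^*)}$ and $z_k \triangleq \frac{1}{F(\vec{v}_k)-F(\vec{w}^*)}$ for the reciprocal optimality gaps of the federated iterate and of the centralized auxiliary iterate. The target is a single one-round recursion $\mathbb{E}\{u_k\}\ge\mathbb{E}\{u_{k-1}\}+\eta\varphi\tau-\frac{\rho h(\tau)+B(\Pi)}{\epsilon^2}$, which telescopes over $k=1,\dots,K$ directly to the claim once $\mathbb{E}\{u_0\}\ge 0$ is dropped, with condition~2 guaranteeing that the per-round increment is positive.

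First I would treat the purely centralized part. Within round $k$ the auxiliary vector $\vec{v}_k$ is initialized at $\vec{w}^\Pi_{k-1}$ and runs $\tau$ exact gradient steps on $F$. Since $F$ is $\beta$-smooth (Lemma \ref{lm}) and $\eta\le\frac1\beta$ (condition 1), the standard descent inequality yields, per inner step $j$, a gap decrease proportional to $\eta(1-\frac{\beta\eta}{2})\norm{\nabla F(\vec{v}_k(j))}^2$; convexity then lower-bounds $\norm{\nabla F(\vec{v}_k(j))}$ by the gap divided by $\norm{\vec{v}_k(j)-\vec{w}^*}$, and the definitions $\varphi\triangleq\omega(1-\frac{\beta\eta}{2})$, $\omega\triangleq\min_k\frac{1}{\norm{\vec{w}^\Pi_k-\vec{w}^*}}$ convert this into the reciprocal increment $\frac{1}{F(\vec{v}_k(j{+}1))-F(\vec{w}^*)}-\frac{1}{F(\vec{v}_k(j))-F(\vec{w}^*)}\ge\eta\varphi$. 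Condition 3 keeps every intermediate gap at least $\epsilon>0$ so the reciprocals are well defined, and summing over the $\tau$ inner steps gives $z_k\ge u_{k-1}+\eta\varphi\tau$.

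Next I would quantify how far the federated iterate sits from the centralized one. Splitting through the all-device aggregate $\vec{w}_k$, I write $F(\vec{w}^\Pi_k)-F(\vec{v}_k)=\big(F(\vec{w}^\Pi_k)-F(\vec{w}_k)\big)+\big(F(\vec{w}_k)-F(\vec{v}_k)\big)$. The $\rho$-Lipschitz property together with the per-round bound $\norm{\vec{w}_k-\vec{v}_k}\le h(\tau)$ from \cite{wang2019adaptive} controls the second term by $\rho h(\tau)$, while Theorem \ref{thm1} controls the first in expectation by $B(\Pi)$, so $\mathbb{E}\{F(\vec{w}^\Pi_k)-F(\vec{v}_k)\}\le\rho h(\tau)+B(\Pi)$. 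Converting this additive bound into one on $\mathbb{E}\{z_k-u_k\}$ is where conditions 3 and 4 enter: rewriting the reciprocal difference as a single fraction whose denominator $\big(F(\vec{v}_k)-F(\vec{w}^*)\big)\big(F(\vec{w}^\Pi_k)-F(\vec{w}^*)\big)\ge\epsilon^2$, I obtain $\mathbb{E}\{z_k-u_k\}\le\frac{\rho h(\tau)+B(\Pi)}{\epsilon^2}$. Chaining with $z_k\ge u_{k-1}+\eta\varphi\tau$ produces the recursion, and telescoping finishes the proof.

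I expect the main obstacle to be the bookkeeping around the expectation and the $\epsilon$-floors. The increment $z_k\ge u_{k-1}+\eta\varphi\tau$ is deterministic, but $B(\Pi)$ is only an \emph{expectation} bound over the random schedule, so the two halves must be combined only after applying $\mathbb{E}\{\cdot\}$, and the ratio in the third step must be bounded \emph{before} the expectation is taken --- which is precisely why conditions 3 and 4 impose positivity of every relevant gap by at least $\epsilon$. A secondary delicacy is that $\omega$ is defined through the federated iterates $\vec{w}^\Pi_k$ whereas the inner descent runs on $\vec{v}_k(j)$; I would need the auxiliary iterates to remain within the same ball about $\vec{w}^*$ so that $\omega$ stays a valid multiplier, which the monotone decrease of centralized gradient descent within a round supplies.
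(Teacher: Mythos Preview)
Your proposal is correct and follows essentially the same route as the paper: both arguments telescope the reciprocal optimality gap, use the centralized-descent increment $\eta\varphi\tau$ per round, split $F(\vec{w}^\Pi_k)-F(\vec{v}_k)$ through the full-aggregate $\vec{w}_k$ to pick up $\rho h(\tau)$ (via $\rho$-Lipschitzness and \cite{wang2019adaptive}) and $B(\Pi)$ (via Theorem~\ref{thm1}), and invoke the $\epsilon$-floors from conditions~3--4 (together with the monotone descent $F(\vec{v}_{k+1})\le F(\vec{w}^\Pi_k)$) to lower-bound each denominator by $\epsilon^2$. The only cosmetic difference is that the paper starts from the already-telescoped identity (30) of \cite{wang2019adaptive} and bounds the resulting correction sum, whereas you derive the one-round recursion $\mathbb{E}\{u_k\}\ge\mathbb{E}\{u_{k-1}\}+\eta\varphi\tau-\frac{\rho h(\tau)+B(\Pi)}{\epsilon^2}$ first and then telescope.
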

\begin{proof}
First, we define $\theta_k=F(\Vec{v}_k)-F(\vec{w}^*)$. According to (30) in \cite{wang2019adaptive}, we have
\begin{equation}
\begin{aligned}
\setlength\abovedisplayskip{3pt}
\setlength\belowdisplayskip{3pt}
    \frac{1}{\theta_K} - \frac{1}{F(\vec{w}_0)-F(\vec{w}^*)}
    & \geq K\tau\omega\eta \left(1-\frac{\beta\eta}{2}\right) + \sum_{k=1}^{K-1}\left(\frac{1}{F(\vec{w}^{\Pi}_k)-F(\vec{w}^*)}-\frac{1}{\theta_k}\right),
    \label{step5}
\end{aligned}
\end{equation}
where $\omega\triangleq {\rm{min}}_k\frac{1}{\norm{\vec{w}^\Pi_k-\vec{w}^*}}$.
Each term in the summation in the right-hand side of \eqref{step5} can be further expressed as
\begin{align}
\setlength\abovedisplayskip{3pt}
\setlength\belowdisplayskip{3pt}
    \frac{1}{F(\vec{w}^{\Pi}_k)-F(\vec{w}^*)}-\frac{1}{\theta_k}
    & = \frac{\theta_k-\left(F(\vec{w}^{\Pi}_k)-F(\vec{w}^*)\right)}{\left(F(\vec{w}^{\Pi}_k)-F(\vec{w}^*)\right)\theta_k}  = \frac{F(\Vec{v}_k)-F(\vec{w}^{\Pi}_k)}{\left(F(\vec{w}^{\Pi}_k)-F(\vec{w}^*)\right)\theta_k} \nonumber \\
    & = \frac{\left(F(\Vec{v}_k)-F(\vec{w}_k)\right)-\left(F(\vec{w}^{\Pi}_k)-F(\vec{w}_k)\right)}{\left(F(\vec{w}^{\Pi}_k)-F(\vec{w}^*)\right)\theta_k} \nonumber \\
    & \geq \frac{-\rho h(\tau)-\left(F(\vec{w}^{\Pi}_k)-F(\vec{w}_k)\right)}{\left(F(\vec{w}^{\Pi}_k)-F(\vec{w}^*)\right)\theta_k},
    \label{step6}
\end{align}
where the last inequality is due to Theorem 1 in \cite{wang2019adaptive}.
Assume that $\theta_k = F(\Vec{v}_k)-F(\vec{w}^*)\geq \epsilon$ for all $k$.
By summing up (26) in \cite{wang2019adaptive} for all $\tau$ steps of centralized gradient descent of $\Vec{v}_k$, we have $F(\Vec{v}_k) \leq F(\vec{w}^\Pi_{k-1})$.
Therefore, $F(\vec{w}^{\Pi}_k)-F(\vec{w}^*) \geq F(\Vec{v}_{k+1})-F(\vec{w}^*) \geq \epsilon $, and thus $\left(F(\vec{w}^{\Pi}_k)-F(\vec{w}^*)\right)\theta_k \geq \epsilon^2$, consequently
\begin{align}
\setlength\abovedisplayskip{3pt}
\setlength\belowdisplayskip{3pt}
    \frac{-1}{\left(F(\vec{w}^{\Pi}_k)-F(\vec{w}^*)\right)\theta_k} &\geq -\frac{1}{\epsilon^2}.
    \label{step7}
\end{align}
Substituting \eqref{step7} into \eqref{step6} yields
\begin{align}
\setlength\abovedisplayskip{3pt}
\setlength\belowdisplayskip{3pt}
    \frac{1}{F(\vec{w}^{\Pi}_k)-F(\vec{w}^*)}-\frac{1}{\theta_k}
    &\geq \frac{-\rho h(\tau)-\left(F(\vec{w}^{\Pi}_k)-F(\vec{w}_k)\right)}{\epsilon^2} \nonumber.
\end{align}
Then take expectation over the randomness of $\Pi$ and apply Theorem \ref{thm1}, we have
\begin{equation}
\setlength\abovedisplayskip{3pt}
\setlength\belowdisplayskip{3pt}
  \mathbb{E}\left\{\frac{1}{F(\vec{w}^{\Pi}_k)-F(\vec{w}^*)}-\frac{1}{\theta_k} \right\}
  \geq \frac{-\rho h(\tau)-B(\Pi)}{\epsilon^2}.
  \label{step8}
\end{equation}
Then substitute \eqref{step8} into \eqref{step5} and take expectation over the randomness of $\Pi$, we have
\begin{align}
\setlength\abovedisplayskip{3pt}
\setlength\belowdisplayskip{3pt}
    \mathbb{E} \left\{\frac{1}{\theta_K} - \frac{1}{F(\vec{w}_0)-F(\vec{w}^*)}\right\}
    & \geq K\tau\omega\eta \left(1-\frac{\beta\eta}{2}\right) + \sum_{k=1}^{K-1}\mathbb{E}\left\{\frac{1}{F(\vec{w}^{\Pi}_k)-F(\vec{w}^*)}-\frac{1}{\theta_k} \right\} \nonumber \\
    & \geq K\tau\omega\eta \left(1-\frac{\beta\eta}{2}\right) + (K-1) \frac{-\rho h(\tau)-B(\Pi)}{\epsilon^2} .
    \label{step11}
\end{align}
Also assume that $F(\vec{w}^\Pi_K)-F(\vec{w}^*)\geq \epsilon$. Similar to the argument as for obtaining \eqref{step7}, we have
\begin{equation}
\setlength\abovedisplayskip{3pt}
\setlength\belowdisplayskip{3pt}
    \frac{-1}{\left(F(\vec{w}^{\Pi}_K)-F(\vec{w}^*)\right)\theta_K} \geq -\frac{1}{\epsilon^2}.
    \label{step9}
\end{equation}
Subsequently, we have
\begin{align}
\setlength\abovedisplayskip{3pt}
\setlength\belowdisplayskip{3pt}
    \frac{1}{F(\vec{w}^{\Pi}_K)-F(\vec{w}^*)}-\frac{1}{\theta_K}
    & = \frac{\theta_K-\left(F(\vec{w}^{\Pi}_K)-F(\vec{w}^*)\right)}{\left(F(\vec{w}^{\Pi}_K)-F(\vec{w}^*)\right)\theta_K}  = \frac{F(\Vec{v}_K)-F(\vec{w}^{\Pi}_K)}{\left(F(\vec{w}^{\Pi}_K)-F(\vec{w}^*)\right)\theta_K} \nonumber \\
    & = \frac{\left(F(\Vec{v}_K)-F(\vec{w}_K)\right)-\left(F(\vec{w}^{\Pi}_K)-F(\vec{w}_K)\right)}{\left(F(\vec{w}^{\Pi}_K)-F(\vec{w}^*)\right)\theta_K} \nonumber \\
    & \overset{(a)}{\geq} \frac{-\rho h(\tau)-\left(F(\vec{w}^{\Pi}_k)-F(\vec{w}_k)\right)}{\left(F(\vec{w}^{\Pi}_k)-F(\vec{w}^*)\right)\theta_k} \nonumber \\
    & \overset{(b)}{\geq} \frac{-\rho h(\tau)-\left(F(\vec{w}^{\Pi}_k)-F(\vec{w}_k)\right)}{\epsilon^2} \label{step_pre_10}.
\end{align}
The inequality of (a) is due to Theorem 1 in \cite{wang2019adaptive}, and the inequality of (b) is due to \eqref{step9}.
Substituting Theorem \ref{thm1} into \eqref{step_pre_10} and taking expectation over the randomness of $\Pi$ yield
\begin{equation}
\setlength\abovedisplayskip{3pt}
\setlength\belowdisplayskip{3pt}
    \mathbb{E}\left\{\frac{1}{F(\vec{w}^{\Pi}_K)-F(\vec{w}^*)}-\frac{1}{\theta_K}\right\}
    \geq \frac{-\rho h(\tau)-B(\Pi)}{\epsilon^2}. \label{step10}
\end{equation}
Then sum up \eqref{step10} and \eqref{step11}, we have
\begin{align}
\setlength\abovedisplayskip{3pt}
\setlength\belowdisplayskip{3pt}
    \mathbb{E}\left\{\frac{1}{F(\vec{w}^{\Pi}_K)-F(\vec{w}^*)}- \frac{1}{F(\vec{w}_0)-F(\vec{w}^*)}\right\}
    \geq K \left(\tau\omega\eta \left(1-\frac{\beta\eta}{2}\right) - \frac{\rho h(\tau)+B(\Pi)}{\epsilon^2}\right).
\end{align}
Since $F(\vec{w}_0)-F(\vec{w}^*)>0$ due to the definition of $\vec{w}^*$, we have
\begin{align}
\setlength\abovedisplayskip{3pt}
\setlength\belowdisplayskip{3pt}
    \mathbb{E}\left\{\frac{1}{F(\vec{w}^{\Pi}_K)-F(\vec{w}^*)}\right\}
    & \geq \mathbb{E}\left\{\frac{1}{F(\vec{w}^{\Pi}_K)-F(\vec{w}^*)}- \frac{1}{F(\vec{w}_0)-F(\vec{w}^*)}\right\} \nonumber \\
    & \geq K \left(\tau\omega\eta \left(1-\frac{\beta\eta}{2}\right) - \frac{\rho h(\tau)+B(\Pi)}{\epsilon^2}\right) \nonumber \\
    & = K \left(\eta\varphi\tau - \frac{\rho h(\tau)+B(\Pi)}{\epsilon^2}\right) \label{step12}.
    \end{align}
\end{proof}
Based on Lemma \ref{lm1}, we can now proof Theorem \ref{thm2}.
Since the condition $\eta \leq \frac{1}{\beta}$ in Theorem \ref{thm2}, the first condition in Lemma \ref{lm1} is always satisfied.

When $\rho h(\tau) + B(\Pi) = 0 $, $\epsilon$ can be chosen arbitrarily small to satisfy conditions 2-4 in Lemma \ref{lm1}. Because the right hand side of \eqref{eqthm2} and \eqref{step12} is equal when $\rho h(\tau) + B(\Pi) = 0 $, Theorem \ref{thm2} is directly from Lemma \ref{lm1}.

When $\rho h(\tau) + B(\Pi) > 0 $, we consider the right hand side of \eqref{step12} and let
\begin{equation}
\setlength\abovedisplayskip{3pt}
\setlength\belowdisplayskip{3pt}
    \frac{1}{\epsilon_0} = K \left(\eta\varphi\tau - \frac{\rho h(\tau)+B(\Pi)}{\epsilon_0^2}\right).
    \label{step13}
\end{equation}
Solving for $\epsilon_0$ and ignoring the negative solution, we have
\begin{equation}
\setlength\abovedisplayskip{3pt}
\setlength\belowdisplayskip{3pt}
    \epsilon_0 =  \frac{1+\sqrt{1+4\eta\varphi K^2\tau\left(\rho h(\tau)+B(\Pi)\right)}}{2\eta\varphi K \tau}.
    \label{step14}
\end{equation}
Since $\epsilon_0>0$, $\eta\varphi\tau - \frac{\rho h(\tau)+B(\Pi)}{\epsilon_0^2}>0$ based on \eqref{step13}.
We note that $\eta\varphi\tau - \frac{\rho h(\tau)+B(\Pi)}{\epsilon^2}$ increases with $\epsilon$ when $\rho h(\tau)+B(\Pi)>0$, hence condition 2 in Lemma \ref{lm1} is satisfied for any $\epsilon>\epsilon_0$.

Suppose that there exists $\epsilon>\epsilon_0$ satisfying conditions 3 and 4 in Lemma \ref{lm1}. Then all the conditions in Lemma \ref{lm1} are satisfied, and we have
\begin{align}
\setlength\abovedisplayskip{3pt}
\setlength\belowdisplayskip{3pt}
    \mathbb{E}\left\{\frac{1}{F(\vec{w}^{\Pi}_K)-F(\vec{w}^*)}\right\} &\geq K \left(\eta\varphi\tau - \frac{\rho h(\tau)+B(\Pi)}{\epsilon^2}\right) \geq K \left(\eta\varphi\tau - \frac{\rho h(\tau)+B(\Pi)}{\epsilon_0^2}\right) = \frac{1}{\epsilon_0}.
\end{align}
It contradicts with condition 4 in Lemma \ref{lm1}, and thus there does not exist $\epsilon>\epsilon_0$ that satisfy both conditions 3 and 4 in Lemma \ref{lm1}. Therefore, either $\exists k$ that satisfies $F(\vec{v}_k)-F(\vec{w}^*)\leq \epsilon_0$ or $F(\vec{v}_k)-F(\vec{w}^*)\leq \epsilon_0$.
Then we have
\begin{equation}
\setlength\abovedisplayskip{3pt}
\setlength\belowdisplayskip{3pt}
    {\rm min}\left\{ \underset{k=1,2,\dots,K}{\rm min}F(\vec{v}_k);F(\vec{w}^\Pi_K)\right\}-F(\vec{w}^*)\leq \epsilon_0.
    \label{step15}
\end{equation}
Based on Theorem \ref{thm1} and Theorem 1 in \cite{wang2019adaptive}, we have $\mathbb{E}\{F(\vec{w}^\Pi_k)\}\leq F(\vec{v}_k)+\rho h(\tau) + B(\Pi)$ for any $k$. Substituting into \eqref{step15} yields
\begin{equation}
\setlength\abovedisplayskip{3pt}
\setlength\belowdisplayskip{3pt}
    \underset{k=1,2,\dots,K}{\rm min}\mathbb{E}\left\{F(\vec{w}^\Pi_k) - F(\vec{w}^*)\right\}
    \leq \epsilon_0 + \rho h(\tau) + B(\Pi).
\end{equation}
Then based on the Jensen's inequality and the convexity of $f(x)=\frac{1}{x}$, we have
\begin{align}
\setlength\abovedisplayskip{3pt}
\setlength\belowdisplayskip{3pt}
    \mathbb{E}\left\{ \frac{1}{\underset{k=1,2,\dots,K}{\rm min}F(\vec{w}^\Pi_k) - F(\vec{w}^*)}\right\}
    & \geq \frac{1}{\mathbb{E}\left\{\underset{k=1,2,\dots,K}{\rm min}F(\vec{w}^\Pi_k) - F(\vec{w}^*)\right\}} \nonumber \\
    & \geq \frac{1}{\underset{k=1,2,\dots,K}{\rm min}\mathbb{E}\left\{F(\vec{w}^\Pi_k) - F(\vec{w}^*)\right\}} \nonumber \\
    & \geq \frac{1}{\epsilon_0 + \rho h(\tau) + B(\Pi)}.
    \label{step19}
\end{align}
Combine \eqref{step19} with the definition of $F(\tilde{\vec{w}})$ (i.e., \eqref{def-tildew}) and \eqref{step14}, we have Theorem \ref{thm2} proved.

\ifCLASSOPTIONcaptionsoff
  \newpage
\fi



%
\bibliographystyle{IEEEtran}
\bibliography{reference}

%








\end{document}